\begin{document}

\title{A Scalable Heuristic for Viral Marketing Under the Tipping Model}


\author{Paulo Shakarian \and
        Sean Eyre \and
        Damon Paulo
}


\institute{P. Shakarian \at
              Network Science Center and\\
              Dept. Electrical Engineering and Computer Science \\
              U.S. Military Academy\\
              West Point, NY 10996\\
              Tel.: 845-938-5576\\
              \email{paulo@shakarian.net}           
           \and
           S. Eyre \at
              Network Science Center and\\
              Dept. Electrical Engineering and Computer Science \\
              U.S. Military Academy\\
              West Point, NY 10996\\
              Tel.: 845-938-5576\\
              \email{sean.k.eyre@gmail.com}           
             \and
              Damon Paulo \\
              Network Science Center and\\
              Dept. Electrical Engineering and Computer Science \\
              U.S. Military Academy\\
              West Point, NY 10996\\
              Tel.: 845-938-5576\\
              \email{damon.paulo@usma.edu}           
}


\maketitle

\begin{abstract}
In a ``tipping'' model, each node in a social network, representing an individual, adopts a property or behavior if a certain number of his incoming neighbors currently exhibit the same.  In viral marketing, a key problem is to select an initial "seed" set from the network such that the entire network adopts any behavior given to the seed.  Here we introduce a method for quickly finding seed sets that scales to very large networks.  Our approach finds a set of nodes that guarantees spreading to the entire network under the tipping model.  After experimentally evaluating 31 real-world networks, we found that our approach often finds seed sets that are several orders of magnitude smaller than the population size and outperform nodal centrality measures in most cases.  In addition, our approach scales well - on a Friendster social network consisting of 5.6 million nodes and 28 million edges we found a seed set in under 3.6 hours.  Our experiments also indicate that our algorithm provides small seed sets even if high-degree nodes are removed.  Lastly, we find that highly clustered local neighborhoods, together with dense network-wide community structures, suppress a trend's ability to spread under the tipping model.
\keywords{social networks \and viral marketing \and tipping model}
\end{abstract}
\section{Introduction}
A much studied model in network science, tipping\cite{Gran78,Schelling78,jy05} (a.k.a. deterministic linear threshold\cite{kleinberg}) is often associated with ``seed'' or ``target'' set selection,~\cite{chen09siam} (a.k.a. the maximum influence problem).  In this problem, we have a social network in the form of a directed graph and thresholds for each individual.  Based on this data, the desired output is the smallest possible set of individuals (seed set) such that, if initially activated, the entire population will become activated (adopting the new property).  This problem is NP-Complete~\cite{kleinberg,Dreyer09} so approximation algorithms must be used.  Though some such algorithms have been proposed,~\cite{leskovec07,chen09siam,benzwi09,chen10} none seem to scale to very large data sets.  Here, inspired by shell decomposition,~\cite{ShaiCarmi07032007,InfluentialSpreaders_2010,baxter11} we present a method guaranteed to find a set of nodes that causes the entire population to activate - but is not necessarily of minimal size.  We then evaluate the algorithm on $31$ large, real-world, social networks and show that it often finds very small seed sets (often several orders of magnitude smaller than the population size).  We also show that the size of a seed set is related to Louvain modularity and average clustering coefficient.  Therefore, we find that dense community structure combined with tight-knit local neighborhoods inhibit the spreading of activation under the tipping model.  We also found that our algorithm outperforms the classic centrality measures and is robust against the removal of high-degree nodes.

The rest of the paper is organized as follows.  In Section~\ref{prelim-sec}, we provide formal definitions of the tipping model.  This is followed by the presentation of our new algorithm in Section~\ref{alg-sec}.  We then describe our experimental results in Section~\ref{res-sec}.  Finally, we provide an overview of related work in Section~\ref{rw-sec}.

\section{Technical Preliminaries}
\label{prelim-sec}
Throughout this paper we assume the existence of a \textit{social network,} $G=(V,E)$, where $V$ is a set of vertices and $E$ is a set of directed edges.  We will use the notation $n$ and $m$ for the cardinality of $V$ and $E$ respectively.  For a given node $v_i \in V$, the set of incoming neighbors is $\eta^{in}_i$, and the set of outgoing neighbors is $\eta^{out}_i$.  The cardinalities of these sets (and hence the in- and out-degrees of node $v_i$) are $d^{in}_i, d^{out}_i$ respectively.  We now define a threshold function that for each node returns the fraction of incoming neighbors that must be activated for it to become activate as well.

\begin{definition}[Threshold Function]
We define the \textbf{threshold function} as mapping from V to $ (0,1] $.  Formally: $ \theta: V \rightarrow (0,1] $.
\end{definition}

For the number of neighbors that must be active, we will use the shorthand $k_i$.  Hence, for each $v_i$, $k_i = \lceil \theta (v_i) \cdot d^{in}_i \rceil$.  We now define an \textit{activation function} that, given an initial set of active nodes, returns a set of active nodes after one time step.

\begin{definition}[Activation Function]
Given a threshold function, $ \theta $, an \textbf{activation function} $ A_{\theta} $ maps subsets of V to subsets of V, 
where for some $ V' \subseteq V $,
\begin{equation}
A_{\theta}(V') = V' \cup \{ v_i \in V\ s.t.\ |\eta^{in}_i \cap V'| \geq k_i \} 
\end{equation}
\end{definition}

We now define multiple applications of the activation function.

\begin{definition}[Multiple Applications of the Activation Function]
Given a natural number $ i > 0$, set $V' \subseteq V $, and threshold function, $ \theta $, we define the multiple applications of the activation function, ${A^i_{\theta}}(V')$, as follows:
\begin{equation}
 A^i_\theta(V') = \begin{cases} A_\theta(V' ) & \text{if $i=1$}  \\ A_\theta(A^{i-1}_\theta(V' )) & \text{otherwise} \end{cases} 
\end{equation}
\end{definition}

Clearly, when $ A^i_\theta(V')= A^{i-1}_\theta(V')$  the process has converged.  Further, this always converges in no more than $n$ steps (as, prior to converging, a process must, in each step, activate at least one new node).  Based on this idea, we define the function $\Gamma$ which returns the set of all nodes activated upon the convergence of the activation function.

\begin{definition}[$\Gamma$ Function]
Let j be the least value such that $ A^j_{\theta}(V') = A^{j-1}_{\theta}(V') $.  We define the function $\Gamma_\theta : 2^V \rightarrow 2^V$ as follows.
\begin{equation}
\mathbf{ \Gamma_\theta } (V') = A^j_{ \theta }(V')
\end{equation}
\end{definition}

We now have all the pieces to introduce our problem - finding the minimal number of nodes that are initially active to ensure that the entire set $V$ becomes active.

\begin{definition}[The MIN-SEED Problem]
The MIN-SEED Problem is defined as follows: given a threshold function, $ \theta $, return $ V' \subseteq V\ s.t.\
\Gamma_\theta (V') = V $, and there does not exist $ V'' \subseteq V $ where $ |V''| < |V'| $ and 
$ \Gamma_\theta (V'') = V $.
\end{definition}

The following theorem is from the literature~\cite{kleinberg,Dreyer09} and tells us that the MIN-SEED problem is NP-complete.

\begin{theorem}[Complexity of MIN-SEED~\cite{kleinberg,Dreyer09}]
MIN-SEED in NP-Complete.
\end{theorem}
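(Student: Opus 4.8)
\noindent The final statement is the standard NP-completeness of MIN-SEED, so the plan is the usual two-part argument applied to its decision version: \emph{given a threshold function $\theta$ and an integer $k$, does there exist $V' \subseteq V$ with $|V'| \le k$ and $\Gamma_\theta(V') = V$?} Membership in NP is the easy half. A seed set $V'$ with $|V'| \le k$ is a polynomial-size certificate, and verifying it amounts to evaluating $\Gamma_\theta(V')$ directly from the definition: iterate $A_\theta$, maintaining for each $v_i$ a running count of its currently active incoming neighbors to compare against $k_i$, and stop once the active set stabilizes. By the observation made above this occurs within $n$ rounds, and each round is one pass over the $m$ edges, so verification runs in time polynomial in $n+m$; one accepts iff the resulting set equals $V$.

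For hardness I would give a polynomial-time reduction from Set Cover (a reduction from Vertex Cover runs along identical lines and is the route taken in the cited works). From a universe $U = \{e_1,\dots,e_m\}$, sets $S_1,\dots,S_\ell$ with $\bigcup_j S_j = U$, and a target $k$, build $G$ with one ``set node'' $a_j$ for each $S_j$ and one ``element node'' $b_i$ for each $e_i$, and for every incidence $e_i \in S_j$ insert \emph{both} directed edges $(a_j, b_i)$ and $(b_i, a_j)$. Then $d^{in}_{b_i}$ equals the number $c_i \ge 1$ of sets containing $e_i$, so setting $\theta(b_i) = 1/c_i$ yields $k_{b_i} = 1$; and $d^{in}_{a_j} = |S_j|$, so setting $\theta(a_j) = 1$ yields $k_{a_j} = |S_j|$ (every $a_j$ has in-degree at least $1$, so the construction introduces no vacuously active node). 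The intended correspondence is that seeding the set nodes of a cover activates, in one round, every element node --- each then has an active incoming neighbor and its threshold is $1$ --- and one round later every remaining set node, since each then sees all $|S_j|$ of its incoming neighbors active; hence $\Gamma_\theta$ reaches $V$, and the forward implication (a cover of size $\le k$ gives a seed of size $\le k$) is immediate.

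The direction I expect to be the real work is the converse, converting an arbitrary feasible seed of size $\le k$ into a cover of size $\le k$, because such a seed could contain element nodes or attempt to route activation through the element layer. I would dispatch it with two normalization observations. First, if $b_i$ lies in the seed, replace it by any $a_j$ with $e_i \in S_j$; a short induction on the round index --- using that $A_\theta$ is monotone in its argument and that the removed node $b_i$ is reactivated one step later by the freshly seeded $a_j$ (as $k_{b_i} = 1$) --- shows feasibility is preserved and the seed does not grow, so iterating leaves a feasible seed $V^\star$ built only from set nodes. Second, any $a_j$ that becomes active through propagation does so only after \emph{all} of $S_j$ is already active, and therefore never triggers a new element node; consequently the active element nodes are exactly those covered by $\{\,S_j : a_j \in V^\star\,\}$, and since $\Gamma_\theta(V^\star) = V$ forces every element node active, that family covers $U$ and has size at most $|V^\star| \le k$. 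Assembling NP membership with this reduction proves the theorem; the parts that need genuine care are stating and checking those two normalization claims precisely, and confirming the gadget never creates an in-degree-$0$ (hence automatically active) node that would short-circuit the correspondence.
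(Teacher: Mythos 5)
The paper does not prove this theorem at all --- it is imported wholesale from the cited literature (Kempe--Kleinberg--Tardos and Dreyer--Roberts), so there is no in-paper argument to compare against. Judged on its own, your proof is correct and self-contained. NP membership is handled exactly as it should be (simulate $A_\theta$ for at most $n$ rounds, each a single pass over the edges). Your Set Cover gadget also checks out: the bidirected incidence edges give $d^{in}_{b_i}=c_i$ and $d^{in}_{a_j}=|S_j|$, the choices $\theta(b_i)=1/c_i$ and $\theta(a_j)=1$ legitimately land in $(0,1]$ and yield $k_{b_i}=1$, $k_{a_j}=|S_j|$, and both normalization steps in the converse are sound --- the seed-swap preserves feasibility by monotonicity of $\Gamma_\theta$ (since $A_\theta$ of the modified seed already contains the original seed), and a propagated $a_j$ activates only once all of its element neighbors are active, so it contributes nothing new. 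The one hypothesis you should state explicitly is that no $S_j$ is empty (otherwise $k_{a_j}=0$ and $a_j$ is vacuously active; this is harmless but worth a ``WLOG''), and you have already flagged it. Note that the cited sources take somewhat different routes --- Dreyer and Roberts work with fixed integer thresholds $k\ge 3$ and Chen's inapproximability argument uses a more elaborate construction --- so your reduction is arguably the cleanest way to establish the bare NP-completeness claim as stated, though it does not recover their stronger results (hardness for constant homogeneous thresholds, or hardness of approximation).
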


\section{Algorithms}
\label{alg-sec}

In this section, we introduce an integer program that solved the MIN-SEED problem exactly and our new decomposition-based heuristic.

\subsection{Exact Approach}

Below we present SEED-IP, an integer program that if solved exactly, guarantees an exact solution to MIN-SEED (see Proposition~\ref{ip-corr}).  Though, in general, solving an integer program is also NP-hard, suggesting that an exact solution will likely take exponential time, good approximation techniques such as branch-and-bound exist and mature tools such as QSopt and CPLEX can readily take and approximate solutions to integer programs.

\begin{definition}[SEED-IP]
\begin{eqnarray}
\label{objFcn}& \min \sum_i x_{i,1},\,\,\,\,\,\,\,\,\,\,	 \mathit{w.r.t.}\\
\label{varConst}\forall i,t \in \{1,\ldots, n\}, & x_{i,t} \in \{0, 1\}  \\
\label{endConst}\forall i, & x_{i,n} = 1 \\
\label{actConst}\forall i, \forall t >0,&\,\,\,\,\,\,\,\,\,\, x_{i,t} \leq x_{i,t-1}+ \frac{1}{d^{in}_i \theta(v_i)}\sum_{v_j \in \eta^{in}_i}x_{j,t-1}
\end{eqnarray}
\end{definition}

\begin{proposition}
\label{ip-corr}
If $V'$ is a solution to MIN-SEED, then setting $\forall v_i \in V', x_{i,1}=1$ and $\forall v_i \notin V', x_{i,1}=0$ is a solution to SEED-IP.\\
If the vector $[x_{i,t}]$ is a solution to SEED-IP, then $\{ v_i | x_{i,1}=1\}$ is a solution to MIN-SEED.
\end{proposition}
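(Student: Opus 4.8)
The plan is to read ``solution to SEED-IP'' as ``optimal feasible point'' (minimizing~(\ref{objFcn})), matching the wording of the MIN-SEED definition, and to note first that both problems are always feasible --- take $V'=V$ for MIN-SEED and $x_{i,t}\equiv 1$ for SEED-IP --- so optima exist. I would work under the standing assumption that every node has $d^{in}_i\ge 1$ (otherwise the coefficient $1/(d^{in}_i\theta(v_i))$ in~(\ref{actConst}) needs a separate convention), and I would treat~(\ref{actConst}) as imposed for $t\in\{2,\dots,n\}$. The whole argument rests on one elementary observation linking~(\ref{actConst}) to the activation rule: whenever the $x_{j,t-1}$ are $0/1$ valued, $\sum_{v_j\in\eta^{in}_i}x_{j,t-1}$ is an integer, so
\[
\frac{1}{d^{in}_i\theta(v_i)}\sum_{v_j\in\eta^{in}_i}x_{j,t-1}\ \ge\ 1
\quad\Longleftrightarrow\quad
\sum_{v_j\in\eta^{in}_i}x_{j,t-1}\ \ge\ \lceil d^{in}_i\theta(v_i)\rceil \ =\ k_i .
\]

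For the first implication I would \emph{construct} the point explicitly: given an optimal MIN-SEED set $V'$, set $x_{i,t}=1$ exactly when $v_i\in A^{t-1}_\theta(V')$, with the convention $A^0_\theta(V'):=V'$. Then~(\ref{varConst}) is immediate; ~(\ref{endConst}) holds because the cascade converges within $n-1$ steps and $\Gamma_\theta(V')=V$, so $A^{n-1}_\theta(V')=V$; and~(\ref{actConst}) is checked by cases. If $x_{i,t}=0$ it is trivial since the right-hand side is nonnegative. If $x_{i,t}=1$, then $v_i\in A^{t-1}_\theta(V')$, so either $v_i\in A^{t-2}_\theta(V')$, giving $x_{i,t-1}=1$ and right-hand side $\ge 1$, or $v_i$ is newly activated at that step, which by definition of $A_\theta$ means $|\eta^{in}_i\cap A^{t-2}_\theta(V')|\ge k_i$, and then the displayed equivalence gives right-hand side $\ge 1$. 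The objective value is $\sum_i x_{i,1}=|V'|$.

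For the second implication I would go the other way: given any feasible $[x_{i,t}]$, set $V'=\{v_i\mid x_{i,1}=1\}$ and prove by induction on $t$ that $\{v_i\mid x_{i,t}=1\}\subseteq A^{t-1}_\theta(V')$. The base case $t=1$ is the definition of $V'$. For the step, if $x_{i,t}=1$ then~(\ref{actConst}) forces either $x_{i,t-1}=1$, so by the inductive hypothesis and $A^{t-2}_\theta(V')\subseteq A^{t-1}_\theta(V')$ we get $v_i\in A^{t-1}_\theta(V')$; or $\sum_{v_j\in\eta^{in}_i}x_{j,t-1}\ge k_i$, whence by the inductive hypothesis $|\eta^{in}_i\cap A^{t-2}_\theta(V')|\ge k_i$ and thus $v_i\in A_\theta(A^{t-2}_\theta(V'))=A^{t-1}_\theta(V')$. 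Taking $t=n$ and using~(\ref{endConst}) ($x_{i,n}=1$ for all $i$) yields $V\subseteq A^{n-1}_\theta(V')$, hence $A^{n-1}_\theta(V')=V$, and therefore $\Gamma_\theta(V')=V$ since the sequence $A^t_\theta(V')$ is non-decreasing and bounded above by $V$.

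Finally I would combine the two constructions to obtain optimality, which is what each half of the proposition actually claims. The first construction exhibits a feasible SEED-IP point whose objective equals the MIN-SEED optimum; the second turns any feasible SEED-IP point into a valid seed set of the same cardinality; hence the two optimal values coincide. Consequently the point built in the first implication attains the SEED-IP optimum, and the seed set extracted in the second implication attains the MIN-SEED optimum. I expect the only real friction to be bookkeeping: the index shift between $x_{i,1}$ and the seed $V'$ (so $x_{i,t}$ tracks $A^{t-1}_\theta(V')$, not $A^{t}_\theta(V')$), the integer-rounding step that turns the fractional inequality~(\ref{actConst}) into the threshold condition $\,\cdot\,\ge k_i$, and the minor boundary issues (the $t=1$ case of~(\ref{actConst}) and the in-degree-zero nodes) mentioned above.
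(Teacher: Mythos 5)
Your proposal is correct and follows essentially the same route as the paper: both arguments rest on the two-way correspondence between feasible seed sets and feasible SEED-IP points with equal objective value (seed set $\mapsto$ the indicator of the cascade $A^{t-1}_\theta(V')$, and feasible IP point $\mapsto$ the set $\{v_i \mid x_{i,1}=1\}$), from which optimality transfers in both directions. The only difference is presentational --- the paper packages the two directions as proofs by contradiction and merely asserts the link between Constraint~(\ref{actConst}) and the activation rule, whereas you make that link explicit via the integrality/ceiling equivalence $\sum_j x_{j,t-1}\ge d^{in}_i\theta(v_i)\iff \sum_j x_{j,t-1}\ge k_i$ and verify feasibility of the constructed point by induction, which actually fills in details the paper's write-up glosses over.
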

\begin{proof}
\noindent Claim 1: If $V'$ is a solution to MIN-SEED, then setting $\forall v_i \in V', x_{i,1}=1$ and $\forall v_i \notin V', x_{i,1}=0$ is a solution to SEED-IP.\\
Let $[x_{i,t}]$ be a vector for SEED-IP created as per claim 1.  Suppose, by way of contradiction (BWOC), there exists vector $[x'_{i,t}]$ s.t. $\sum_i x'_{i,1} < \sum_i x_{i,1}$.  However, consider the set of nodes $V'' = \{ v_i | x'_{i,1}=1\}$.  By Constraint~\ref{actConst} of SEED-IP, we know that, for $t >1$, that if $x'_{i,t} = 1$, we have $v_i \in A^t_\theta(V'')$.  Hence, by Constaint~\ref{endConst} $V''$ is a solution to MIN-SEED.  This means that $|V''| < |V'|$ as $\sum_i x'_{i,1} < \sum_i x_{i,1}$, which is a contradiction.\\
\noindent Claim 2: If the vector $[x_{i,t}]$ is a solution to SEED-IP, then $\{ v_i | x_{i,1}=1\}$ is a solution to MIN-SEED.\\
Suppose, BWOC, there exists set $V''$ that is a solution to MIN-SEED s.t. $|V''| < |\{ v_i | x_{i,1}=1\}|$.  Consider the vector $[x'_{i,t}]$ where $\forall i,$ $x'_{i,0}=1$ iff $v_i \in V''$.  By Constraint~\ref{actConst} of SEED-IP, we know that, for $t >1$, that if $v_i \in A^t_\theta(V'')$, we have $x'_{i,t} = 1$.  Hence, as $A^t_\theta(V'') = V$, know that $[x'_{i,t}]$  satisfies Constraint~\ref{endConst}.  Hence, as $|V''| < |\{ v_i | x_{i,1}=1\}|$, we know $\sum_i x'_{i,1} < \sum_i x_{i,1}$, which is a contradiction.\\
\noindent Proof of theorem: Follows directly form claims 1-2.
\end{proof}

However, despite the availability of approximate solvers, SEED-IP requires a quadratic number of variables and constraints (Proposition~\ref{numConst}), which likely will prevent this approach from scaling to very large datasets.  As a result, in the next section we introduce our heuristic approach.

\begin{proposition}
\label{numConst}
SEED-IP requires $n^2$ variables and $2n^2$ constraints.
\end{proposition}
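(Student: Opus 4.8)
The plan is to simply count the variables and constraints directly from the definition of SEED-IP, since the statement is essentially a bookkeeping exercise.

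First I would count the variables. The only variables appearing in SEED-IP are the $x_{i,t}$, and by Constraint~\eqref{varConst} these range over $i \in \{1,\ldots,n\}$ and $t \in \{1,\ldots,n\}$. Hence there are exactly $n \cdot n = n^2$ variables. (One may note that $x_{i,0}$ is referenced in Constraint~\eqref{actConst} when $t=1$; I would treat $x_{i,0}$ either as shorthand for $x_{i,1}$ or simply observe that including it adds only a lower-order $n$ term, which does not affect the $n^2$ leading count — I'd prefer to phrase the proposition's claim as a leading-order count to be safe.)

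Next I would count the constraints, grouping them by the schema that introduces them. Constraint~\eqref{varConst} (integrality/binary) contributes one constraint per pair $(i,t)$, giving $n^2$. Constraint~\eqref{endConst} contributes one per $i$, giving $n$. Constraint~\eqref{actConst} is indexed over all $i \in \{1,\ldots,n\}$ and all $t > 0$, i.e. $t \in \{1,\ldots,n\}$, giving another $n^2$. Summing: $n^2 + n + n^2 = 2n^2 + n$, which is $2n^2$ to leading order. I would then conclude the proof by observing this matches the stated bound (again reading ``$2n^2$ constraints'' as the dominant term).

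The main obstacle — really the only subtlety — is deciding how to handle the off-by-lower-order issues: whether the binary constraints in~\eqref{varConst} should be counted as ``constraints'' at all (some treatments regard variable-domain declarations as separate from the constraint set), and the stray $+n$ and the $x_{i,0}$ terms. My approach would be to state explicitly in the proof that we count to leading order in $n$, so that the $n^2$ / $2n^2$ figures are exact as asymptotic counts; this sidesteps any ambiguity while faithfully supporting the paper's point that SEED-IP has quadratically many variables and constraints and therefore will not scale. No deeper argument is needed.
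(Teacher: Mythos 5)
Your direct count is correct and is essentially the only argument available; the paper in fact states Proposition~\ref{numConst} without any proof, and your tally ($n^2$ variables from Constraint~\eqref{varConst}, plus $n^2 + n + n^2$ constraints from \eqref{varConst}--\eqref{actConst}) matches the intended leading-order reading of the claim. Your explicit handling of the lower-order discrepancies (the stray $+n$ from \eqref{endConst}, the $x_{i,0}$ reference at $t=1$, and whether binary domain declarations count as constraints) is more careful than the paper itself, and does not change the conclusion that the program is quadratic in size.
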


\subsection{Heuristic}

To deal with the intractability of the MIN-SEED problem, we design an algorithm that finds a non-trivial subset of nodes that causes the entire graph to  activate, but we do not guarantee that the resulting set will be of minimal size.  The algorithm is based on the idea of shell decomposition often cited in physics literature~\cite{Seidman83,ShaiCarmi07032007,InfluentialSpreaders_2010,baxter11} but modified to ensure that the resulting set will lead to all nodes being activated.  The algorithm, \textsf{TIP\_DECOMP} is presented in this section.

\algsetup{indent=1em}
	\begin{algorithm}[h!]
		\caption{ \textsf{TIP\_DECOMP}}
		\begin{algorithmic}[1]

		\REQUIRE Threshold function, $ \theta $ and directed social network $G=(V,E)$
		\ENSURE $ V' $
		\medskip

		\STATE{ For each vertex $ v_i $, compute $ k_i $}.
		\STATE{ For each vertex $ v_i,\ dist_i = d_i^{in} - k_i $}.
		\STATE{ FLAG = TRUE}.
		\WHILE{ FLAG }
			\STATE {Let $ v_i $ be the element of $ v $ where $ dist_i $ is minimal}.
			\IF { $ dist_i = \infty $}
				\STATE{ FLAG = FALSE}.
			\ELSE
				\STATE{ Remove $ v_i $ from $G$ and for each $ v_j $ in $ \eta_i^{out} $, if $dist_j > 0$, set
				$  dist_j = dist_j-1 $.  Otherwise set $dist_j = \infty $}.
			\ENDIF
		\ENDWHILE
		\RETURN{ All nodes left in $ G $}.
	\end{algorithmic}
\end{algorithm}

Intuitively, the algorithm proceeds as follows (Figure 1).  Given network $G=(V,E)$ where each node $v_i$ has threshold $k_i = \lceil \theta (v_i) \cdot d^{in}_i \rceil$, at each iteration, pick the node for which $d^{in}_i - k_i$ is the least but positive (or $0$) and remove it.  Once there are no nodes for which $d^{in}_i - k_i$ is positive (or $0$), the algorithm outputs the remaining nodes in the network.  

\begin{figure}
    \begin{center}
        \includegraphics[width=1\linewidth]{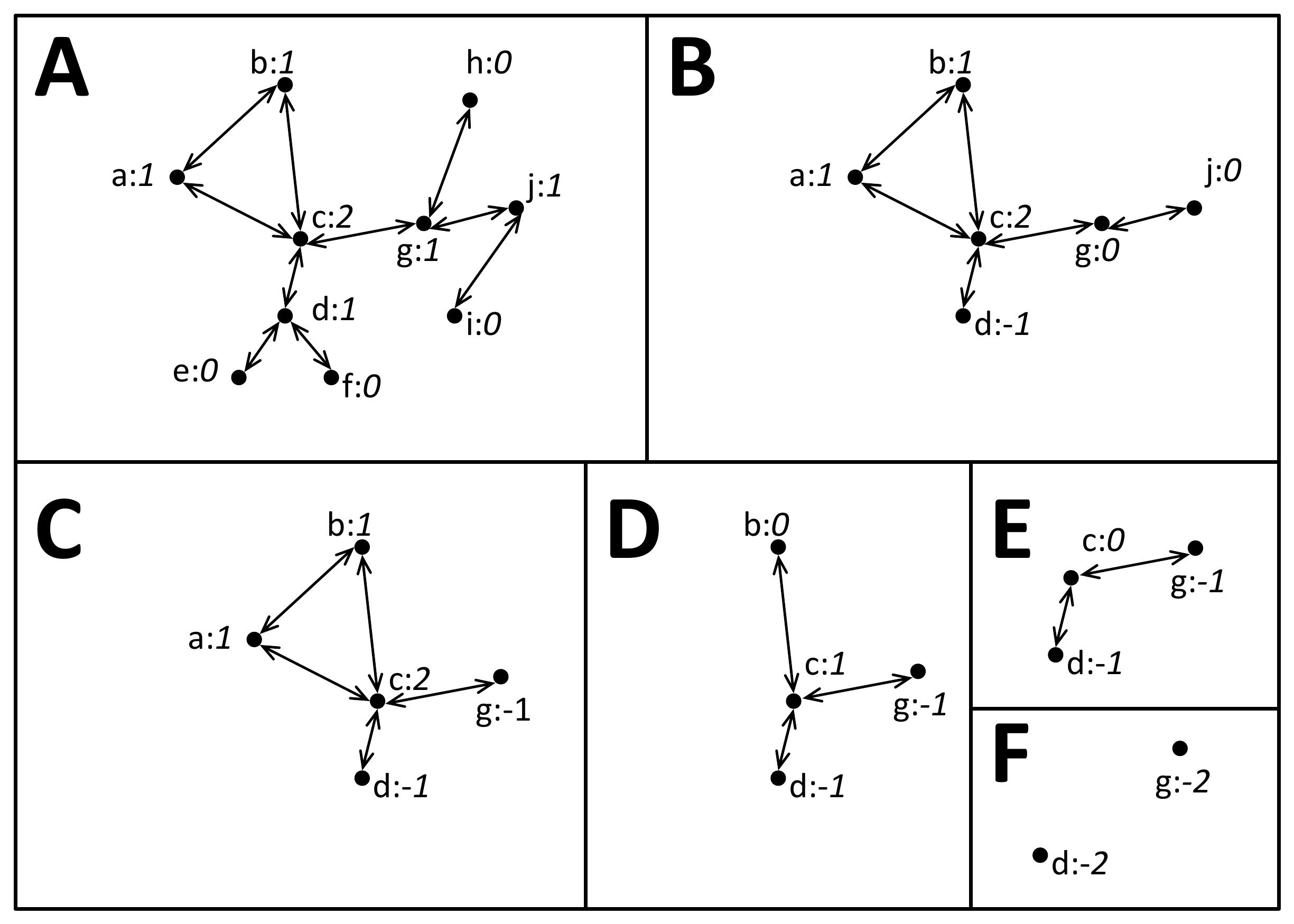}
    \end{center}
\caption{Example of our algorithm for a simple network depicted in box \textbf{A}.  We use a threshold value set to $50\%$ of the node degree.  Next to each node label (lower-case letter) is the value for $d^{in}_i - k_i$ (where $k_i = \lceil \frac{d^{in}_i}{2} \rceil$).  In the first four iterations, nodes e, f, h, and i are removed resulting in the network in box \textbf{B}.  This is followed by the removal of node j resulting in the network in box \textbf{C}.  In the next two iterations, nodes a and b are removed (boxes \textbf{D}-\textbf{E} respectively).  Finally, node c is removed (box \textbf{F}).  The nodes of the final network, consisting of d and g, have negetive values for $d_i-\theta_i$ and become the output of the algorithm.}
\end{figure}

Now, we prove that the resulting set of nodes is guaranteed to cause all nodes in the graph to activate under the tipping model.  This proof follows from the fact that any node removed is activated by the remaining nodes in the network.  

\begin{theorem}
If all nodes in $ V'\ \subseteq\ V $ returned by \textsf{TIP\_DECOMP} are initially active, then every node in $ V $ will eventually be activated, too. 
\end{theorem}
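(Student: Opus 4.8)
The plan is to replay the deletions performed by \textsf{TIP\_DECOMP} in reverse order. Let $v_{(1)}, v_{(2)}, \ldots, v_{(r)}$ be the vertices listed in the order the algorithm removes them, so that the returned set is exactly $V' = V \setminus \{v_{(1)}, \ldots, v_{(r)}\}$. The central preliminary step is a bookkeeping invariant on the variable $dist$: I would show that, as long as a vertex $v_i$ is still present in $G$, its current value $dist_i$ is either $\infty$ or equal to (the number of incoming neighbors of $v_i$ that have not yet been removed) $-\, k_i$. This holds because $dist_i$ is initialized to $d^{in}_i - k_i$ and is decremented by exactly one each time an incoming neighbor of $v_i$ is deleted, up until the step at which it would become negative, at which point the algorithm freezes it at $\infty$ permanently. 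Since \textsf{TIP\_DECOMP} halts the moment the minimum $dist$ value is $\infty$, no vertex is ever removed while its $dist$ is $\infty$; hence every removed vertex $v_{(t)}$ had a finite value $dist_{(t)} \ge 0$ at the instant it was removed, and so, by the invariant, at least $k_{(t)}$ of its incoming neighbors were at that instant still present, i.e.\ they lie in $V' \cup \{v_{(t+1)}, \ldots, v_{(r)}\}$.

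With this fact in hand, I would prove by reverse induction on $t = r, r-1, \ldots, 1$ that $\{v_{(t)}, v_{(t+1)}, \ldots, v_{(r)}\} \subseteq A^{\,r-t+1}_\theta(V')$. The base case $t = r$ is immediate, since $\{v_{(t+1)}, \ldots, v_{(r)}\}$ is empty and $v_{(r)}$ has at least $k_{(r)}$ incoming neighbors inside $V'$, so $v_{(r)} \in A_\theta(V')$. For the inductive step I would first record that $A_\theta$ is monotone and satisfies $A_\theta(X) \supseteq X$: the latter is immediate from the definition, and the former holds because $X \subseteq Y$ implies $|\eta^{in}_i \cap X| \le |\eta^{in}_i \cap Y|$. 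The induction hypothesis then gives $V' \cup \{v_{(t+1)}, \ldots, v_{(r)}\} \subseteq A^{\,r-t}_\theta(V')$; since $v_{(t)}$ has at least $k_{(t)}$ incoming neighbors in that union, one further application of $A_\theta$ places $v_{(t)}$ into $A^{\,r-t+1}_\theta(V')$, while the remaining vertices stay in by monotonicity. Taking $t = 1$ yields $\{v_{(1)}, \ldots, v_{(r)}\} \subseteq A^{\,r}_\theta(V')$, and since $V' \subseteq A^{\,r}_\theta(V')$ as well, we obtain $A^{\,r}_\theta(V') = V$ and hence $\Gamma_\theta(V') = V$, which is exactly the claim.

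I expect the real obstacle to be the bookkeeping invariant on $dist$ --- in particular, verifying that the ``freeze at $\infty$'' rule never fires on a vertex before that vertex is removed, so that the clean identity ``$dist_i = (\text{surviving incoming neighbors of } v_i) - k_i$'' is guaranteed to hold at the precise instant of removal. Everything downstream of that invariant is routine: the reverse induction uses only monotonicity of $A_\theta$ together with the single structural fact that each removed vertex retains at least $k_i$ of its incoming neighbors at the moment it leaves the graph.
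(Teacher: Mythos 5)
Your proposal is correct and follows essentially the same route as the paper's proof: an induction along the removal order showing that each deleted vertex retains at least $k_i$ in-neighbors among the vertices deleted after it together with $V'$, so it is activated once those are. Your version merely makes explicit the $dist$ bookkeeping (including that a vertex frozen at $\infty$ is never removed) and the monotonicity of $A_\theta$, details the paper's proof leaves implicit.
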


\begin{proof}
Let $ w $ be the total number of nodes removed by \textsf{TIP\_DECOMP}, where $ v_1 $ is the last node removed and $ v_w $ is the first node removed.  We prove the theorem by induction on $ w $ as follows.  We use $P(w)$ to denote the inductive hypothesis which states that all nodes from $ v_1 $ to $ v_w $ are active. In the base case, $P(1)$ trivially holds as we are guaranteed that from set $V'$ there are at least $k_1$ edges to $v_1$ (or it would not be removed).  For the inductive step, assuming $P(w)$ is true, when $ v_{w+1} $ was removed from the graph $ dist_{w+1} \geq 0 $ which means that $ d_{w+1}^{in} \geq k_{w+1} $.  All nodes in $\eta^{in}_{w+1}$ at the time when $ v_{w+1} $ was removed are now active, so $ v_{w+1} $ will now be activated - which completes the proof.
\end{proof}

We also note that by using the appropriate data structure (we used a binomial heap in our implementation), for a network of $n$ nodes and $m$ edges, this algorithm can run in time $O(m \log n)$.

\begin{proposition}
\label{tcomp}
The complexity of \textsf{TIP\_DECOMP} is $ O(m \cdot log(n)) $.
\end{proposition}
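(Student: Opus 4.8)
The plan is to argue that \textsf{TIP\_DECOMP} is, up to the choice of priority-queue operations, a standard shell-decomposition loop, and to account for the cost of each primitive operation over the whole execution. First I would establish the total count of the expensive operations. The outer \textbf{while} loop executes at most $n$ times, since each non-terminating iteration permanently removes one vertex from $G$ and the loop stops once no vertex has finite $dist$. For each removed vertex $v_i$, line~9 iterates over its outgoing neighbours $\eta^{out}_i$, performing one decrement (or one ``set to $\infty$'') per out-edge; since each edge $(v_i,v_j)$ is touched exactly once — namely when its tail $v_i$ is removed — the grand total of such neighbour updates over the entire run is $O(m)$. Initialization (lines~1--2) is $O(n+m)$: computing every $k_i$ and $dist_i$ requires one pass over the vertices and their in-edges.

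Next I would pin down the data-structure costs. Maintain all vertices in a binomial heap keyed by $dist_i$. Line~5 (extract the vertex of minimum $dist$) is a \textsc{delete-min}, costing $O(\log n)$, and it is invoked at most $n$ times, for a total of $O(n\log n)$. Each neighbour update in line~9 is a \textsc{decrease-key} (for the $dist_j = dist_j - 1$ case) or can be realized as a \textsc{decrease-key} to a sentinel-then-large value, or handled lazily; in a binomial heap \textsc{decrease-key} is $O(\log n)$, and there are $O(m)$ such updates, giving $O(m\log n)$. Setting $dist_j=\infty$ is most cleanly handled by simply never letting such a vertex win the \textsc{delete-min} — e.g. mark it and skip it when it surfaces, or treat $\infty$ as a key larger than any real value — which adds only $O(\log n)$ per affected vertex and does not change the asymptotics. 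Summing: $O(n+m) + O(n\log n) + O(m\log n) = O(m\log n)$, using that in a connected-enough graph $m \geq n-1$ (and in any case the dominant term is $O((n+m)\log n)$, which the paper writes as $O(m\log n)$).

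I expect the only real subtlety — and the thing worth stating carefully rather than the arithmetic — is the bookkeeping for the $dist_j = \infty$ case, i.e. making sure that a vertex whose $dist$ has hit $0$ and then receives a further decrement is correctly frozen out of future \textsc{delete-min} calls without incurring more than $O(\log n)$ cost, and that the termination test in line~6 (all remaining $dist$ values are $\infty$) is detected in $O(\log n)$ by inspecting the heap minimum. Everything else is the routine aggregate-cost argument: $O(n)$ loop iterations, $O(m)$ total edge relaxations, each heap operation $O(\log n)$.
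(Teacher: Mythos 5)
Your proposal is correct and follows essentially the same route as the paper's proof: a binomial heap keyed on $dist_i$, at most $n$ extractions plus $O(n+m)$ decrease-key/removal operations (one per vertex plus one per in-edge), each costing $O(\log n)$, with the $dist_j=\infty$ case handled by moving the vertex out of the heap. Your treatment of the $\infty$ bookkeeping and the aggregate edge-count argument is in fact slightly more careful than the paper's own writeup, but the underlying argument is identical.
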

\begin{proof}
If we use a binomial heap as described in \cite{cormen}, we can create a heap where we store each node and assign it a key value of $dist_i$ for each node $v_i$.  The creation of a heap takes constant time and inserting the $n$ vertices will take $O(n log(n))$ time.  We can also maintain a list data structure as well.  In the course of the while loop, all nodes will either be removed (as per the algorithm), decreased in key-value no more than $d^{in}_i$ or increased to infinity (which we can implement as being removed and added to the list).  Hence, the number of decrease key or removal operations is bounded by $n+\sum_i d^{in}_i$.  As $\sum_i d^{in}_i = m$ (where $m$ is the number of edges).  As $ O(m \cdot log(n)) $, the statement follows.
\end{proof}

\section{Results}
\label{res-sec}
In this section we describe the results of our experimental evaluation.  We describe the datasets we used for the experiments in Section~\ref{ds-sec}.  We evaluate the run-time of \textsf{TIP\_DECOMP} in Section~\ref{rt-sec}.  In Section~\ref{ss-sec}, we evaluate the size of the seed-set returned by the algorithm and we compare this to the seed size returned by known centrality measures in Section~\ref{cent-sec}.  The speed of the activiation process initiated with seed sets discovered by our algorithm is described in Section~\ref{speedSec}.  We then study how the removal of high-degree nodes and community structure affect the results of the algorithm in Sections~\ref{robust-sec} and \ref{comm-sec} respectively.

The algorithm \textsf{TIP\_DECOMP} was written using Python 2.6.6 in 200 lines of code that leveraged the NetworkX library available from\\ http://networkx.lanl.gov/.  The code used a binomial heap library written by Bj\"orn B. Brandenburg available from http://www.cs.unc.edu/$\sim$bbb/.  The experiments were run on a computer equipped with an Intel X5677 Xeon Processor operating at 3.46 GHz with a 12 MB Cache running Red Hat Enterprise Linux version 6.1 and equipped with 70 GB of physical memory.  All statistics presented in this section were calculated using R 2.13.1.

\subsection{Datasets}
\label{ds-sec}

In total, we examined $36$ networks: nine academic collaboration networks, three e-mail networks, and $24$ networks extracted from social-media sites.  The sites included included general-purpose social-media (similar to Facebook or MySpace) as well as special-purpose sites (i.e. focused on sharing of blogs, photos, or video).

All datasets used in this paper were obtained from one of four sources: the ASU Social Computing Data Repository,~\cite{Zafarani+Liu:2009} the Stanford Network Analysis Project,~\cite{snap} the University of Michigan,~\cite{umich} and Universitat Rovira i Virgili.\cite{uvi}  $31$ of the networks considered were symmetric -- i.e. if a directed edge from vertex $v$ to $v'$ exists, there is also an edge from vertex $v'$ to $v$.  Tables~\ref{fig3} (A-C) show some of the pertinent qualities of the symmetric networks.  The networks are categorized by the results for the MIN-SEED experiments (explained later in this section).  Additionally, we also looked at several non-symmetric (directed) networks and placed them in their own category.  In what follows, we provide their real-world context.

\subsubsection{Category A}
\begin{itemize}
\item{\textbf{BlogCatalog} is a social blog directory that allows users to share blogs with friends.~\cite{Zafarani+Liu:2009}  The first two samples of this site, BlogCatalog1 and 2, were taken in Jul. 2009 and June 2010 respectively.  The third sample, BlogCatalog3 was uploaded to ASU's Social Computing Data Repository in Aug. 2010.}
\item{\textbf{Buzznet} is a social media network designed for sharing photographs, journals, and videos.~\cite{Zafarani+Liu:2009}  It was extracted in Nov. 2010.}
\item{\textbf{Douban } is a Chinese social medial website designed to provide user reviews and recommendations.~\cite{Zafarani+Liu:2009}  It was extracted in Dec. 2010.}
\item{\textbf{Flickr} is a social media website that allows users to share photographs.~\cite{Zafarani+Liu:2009}  It was uploaded to ASU's Social Computing Data Repository in Aug. 2010.}
\item{\textbf{Flixster} is a social media website that allows users to share reviews and other information about cinema.~\cite{Zafarani+Liu:2009}  It was extracted in Dec. 2010.}
\item{\textbf{FourSquare} is a location-based social media site.~\cite{Zafarani+Liu:2009}  It was extracted in Dec. 2010.}
\item{\textbf{Frienster} is a general-purpose social-networking site.~\cite{Zafarani+Liu:2009} It was extracted in Nov. 2010.}
\item{\textbf{Last.Fm} is a music-centered social media site.~\cite{Zafarani+Liu:2009} It was extracted in Dec. 2010.}
\item{\textbf{LiveJournal} is a site designed to allow users to share their blogs.~\cite{Zafarani+Liu:2009}  It was extracted in Jul. 2010.}
\item{\textbf{Livemocha} is touted as the ``world's largest language community.''~\cite{Zafarani+Liu:2009}  It was extracted in Dec. 2010.}
\item{\textbf{WikiTalk} is a network of individuals who set and received messages while editing WikiPedia pages.~\cite{snap}  It was extracted in Jan. 2008.}
\end{itemize}

\subsubsection{Category B}
\begin{itemize}
\item{\textbf{Delicious} is a social bookmarking site, designed to allow users to share web bookmarks with their friends.~\cite{Zafarani+Liu:2009} It was extracted in Dec. 2010.}
\item{\textbf{Digg} is a social news website that allows users to share stories with friends.~\cite{Zafarani+Liu:2009}  It was extracted in Dec. 2010.}
\item{\textbf{EU E-Mail} is an e-mail network extracted from a large European Union research institution.~\cite{snap}  It is based on e-mail traffic from Oct. 2003 to May 2005.}
\item{\textbf{Hyves} is a popular general-purpose Dutch social networking site.~\cite{Zafarani+Liu:2009}  It was extracted in Dec. 2010.}
\item{\textbf{Yelp} is a social networking site that allows users to share product reviews.~\cite{Zafarani+Liu:2009}  It was extracted in Nov. 2010.}
\end{itemize}

\subsubsection{Category C}
\begin{itemize}
\item{\textbf{CA-AstroPh} is a an academic collaboration network for Astro Physics from Jan. 1993 - Apr. 2003.~\cite{snap}}
\item{\textbf{CA-CondMat} is an academic collaboration network for Condense Matter Physics.  Samples from 1999 (CondMat99), 2003 (CondMat03), and 2005 (CondMat05) were obtained from the University of Michigan.~\cite{umich}  A second sample from 2003 (CondMat03a) was obtained from Stanford University.~\cite{snap}}
\item{\textbf{CA-GrQc} is a an academic collaboration network for General Relativity and Quantum Cosmology from Jan. 1993 - Apr. 2003.~\cite{snap}}
\item{\textbf{CA-HepPh} is a an academic collaboration network for High Energy Physics - Phenomenology from Jan. 1993 - Apr. 2003.~\cite{snap}}
\item{\textbf{CA-HepTh} is a an academic collaboration network for High Energy Physics - Theory from Jan. 1993 - Apr. 2003.~\cite{snap}}
\item{\textbf{CA-NetSci} is a an academic collaboration network for Network Science from May 2006.}
\item{\textbf{Enron E-Mail} is an e-mail network from the Enron corporation made public by the Federal Energy Regulatory Commission during its investigation.~\cite{snap}}
\item{\textbf{URV E-Mail} is an e-mail network based on communications of members of the University Rovira i Virgili (Tarragona).~\cite{uvi}  It was extracted in 2003.}
\item{\textbf{YouTube} is a video-sharing website that allows users to establish friendship links.~\cite{Zafarani+Liu:2009}  The first sample (YouTube1) was extracted in Dec. 2008.  The second sample (YouTube2) was uploaded to ASU's Social Computing Data Repository in Aug. 2010.}
\end{itemize}

\subsubsection{Non-Symmetric Networks}
\begin{itemize}
\item{\textbf{Epinions} is a consumer review website that allows members to establish directed trust relationships.~\cite{snap}}
\item{\textbf{WikiVote} is a sample of Wikipedia users voting beahavior (who votes for whom).~\cite{snap}}
\item{\textbf{Slashdot} formerly had a feature called ``Slashdot Zoo'' that allowed users to tag each other as friend or foe.  We looked at three samples based on friendship relationships: one sample from 2008 (Slashdot1) and two from 2009 (Slashdot2-Slashdot3).~\cite{snap}}
\end{itemize}

\begin{table}[ht]
     \begin{center}
        \includegraphics[width=.8\linewidth]{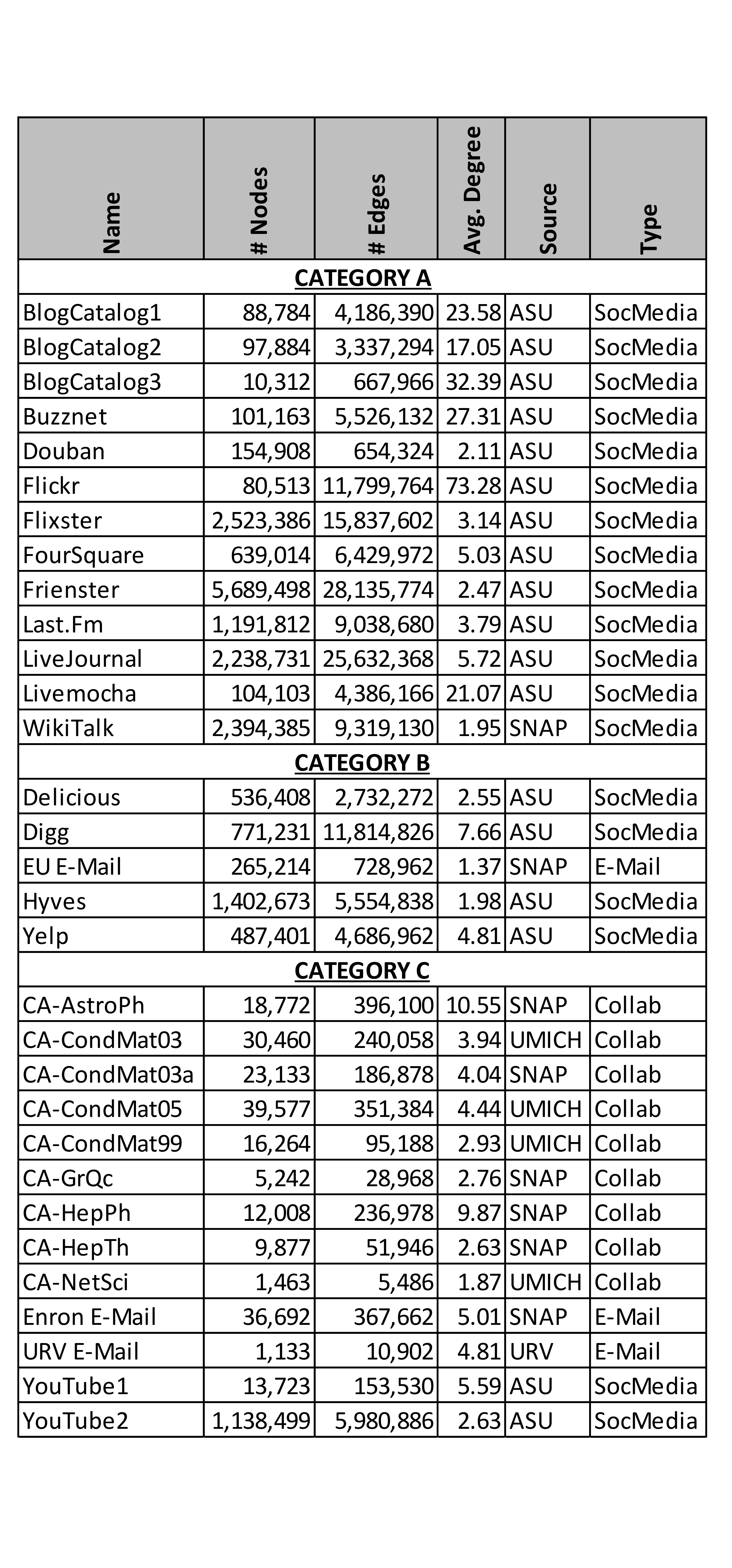}
    \end{center}
    \caption{Information on the networks in Categories A, B, and C.}
    \label{fig3}
\end{table}

\begin{table}[ht]
     \begin{center}
        \includegraphics[width=.8\linewidth]{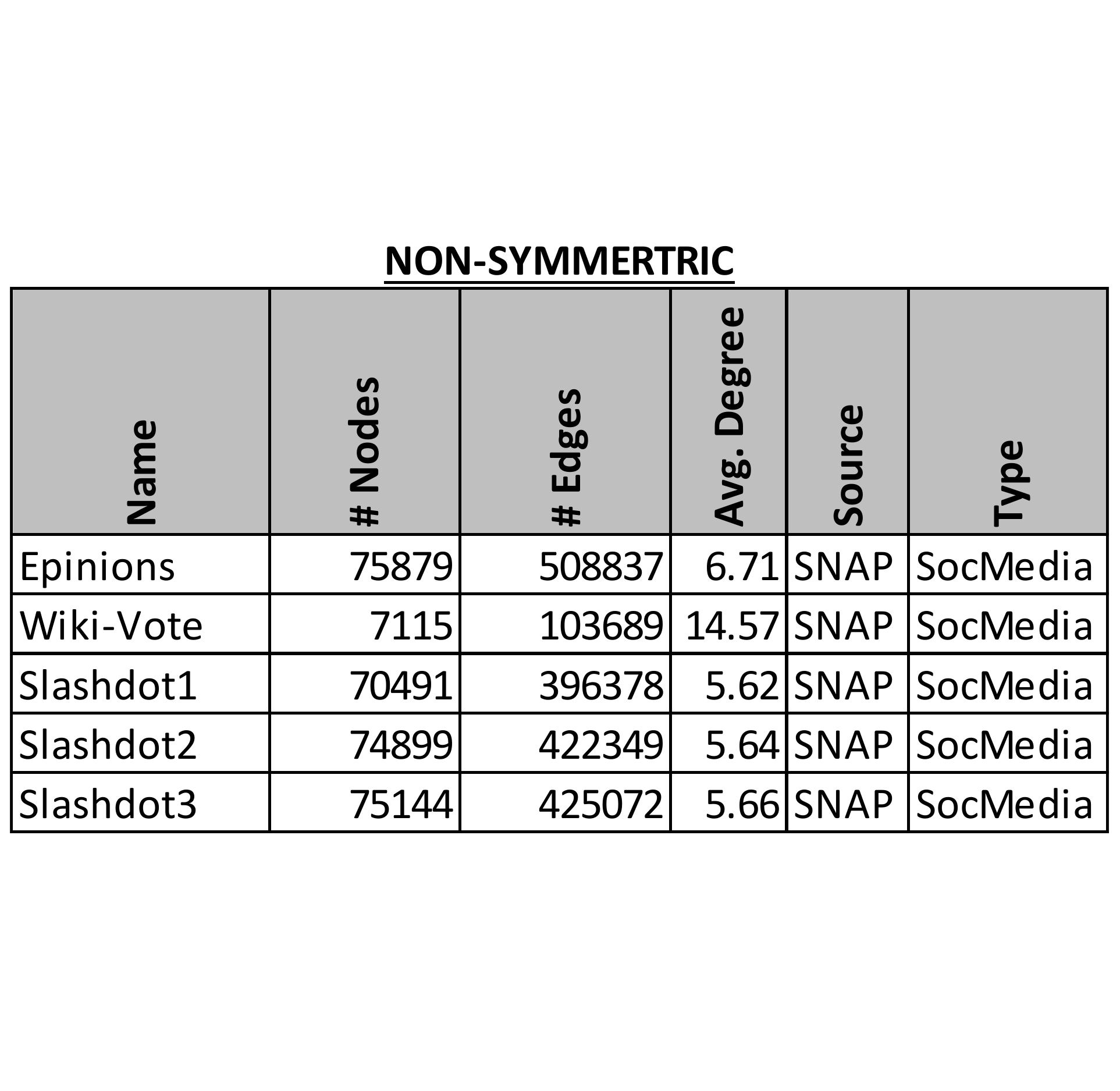}
    \end{center}
    \caption{Information on non-symmetric networks.}
    \label{dirInfo}
\end{table}

\subsubsection{Runtime}
\label{rt-sec}
First, we examined the runtime of the algorithm (see Figure~\ref{figRt} and Table~\ref{figRtTable}).  Our experiments aligned well with our time complexity result (Proposition~\ref{tcomp}).  For example, a network extracted from the Dutch social-media site Hyves consisting of $1.4$ million nodes and $5.5$ million directed edges was processed by our algorithm in at most $12.2$ minutes.  The often-cited LiveJournal dataset consisting of $2.2$ million nodes and $25.6$ million directed edges was processed in no more than $66$ minutes - a short time to approximate an NP-hard combinatorial problem on a large-sized input.

\begin{figure}[htbb]
    \begin{center}
        \includegraphics[width=.89\linewidth]{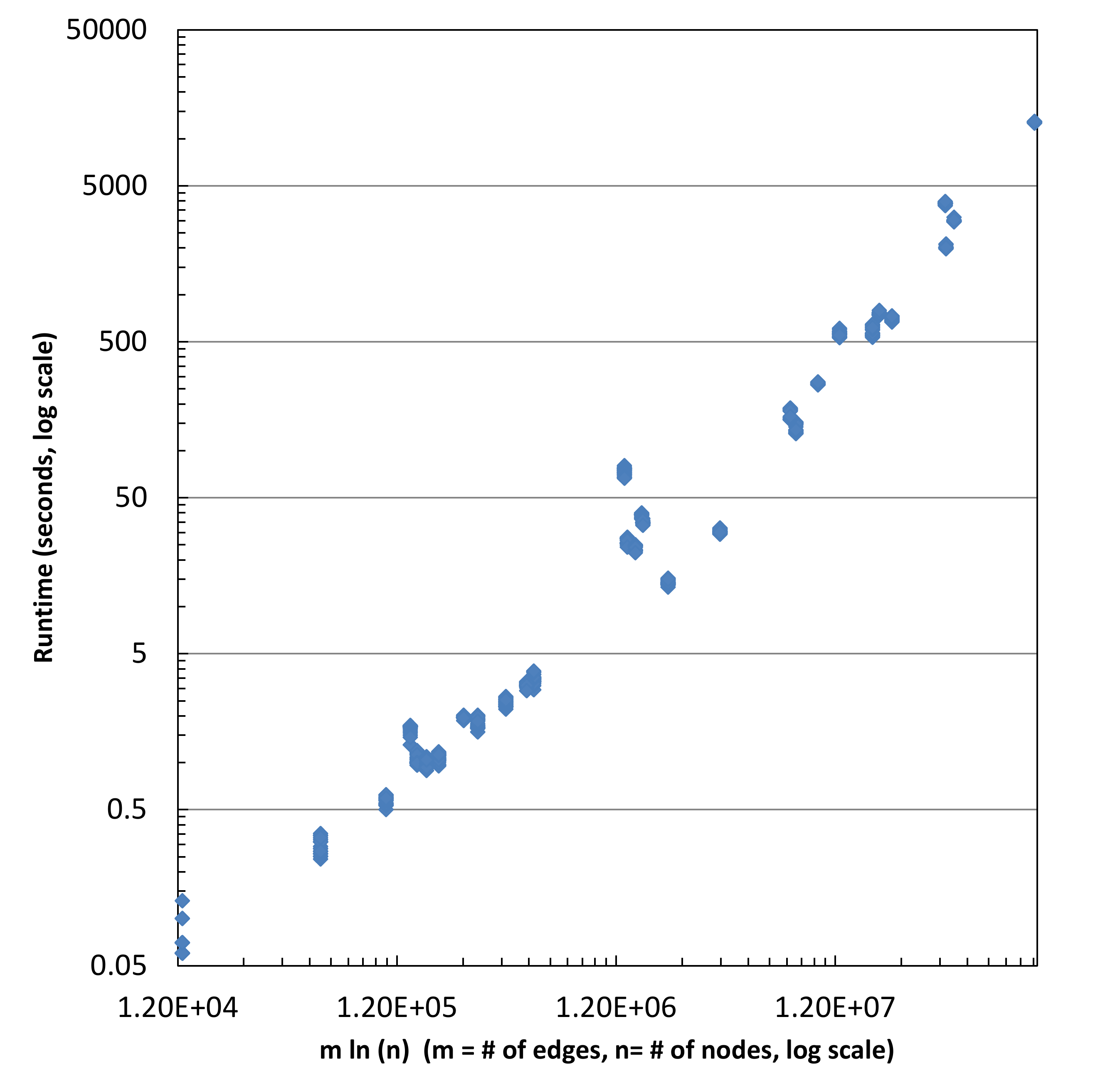}
    \end{center}
    \caption{$m \ln n$ vs. runtime in seconds (log scale, $m$ is number of edges, $n$ is number of nodes).  The relationship is linear with $R^2=0.9015$, $p=2.2 \cdot 10^{-16}$.}
    \label{figRt}
\end{figure}

\begin{table}[h]
     \begin{center}
        \includegraphics[width=.8\linewidth]{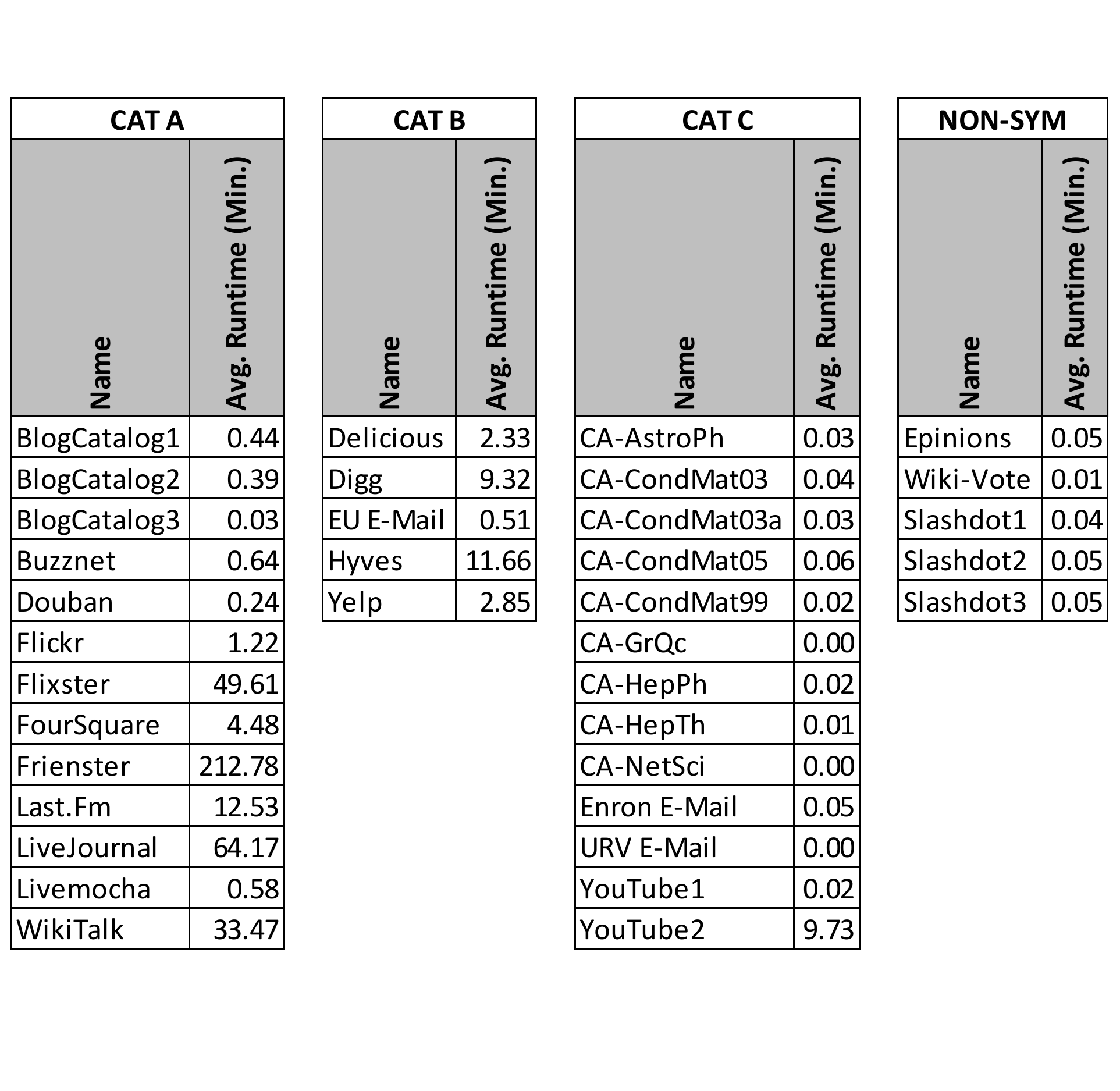}
    \end{center}
    \caption{Runtime data on the datasets used in the experiments.}
    \label{figRtTable}
\end{table}

\subsubsection{Seed Size}
\label{ss-sec}
For each network, we performed $10$ ``integer'' trials.  In these trials, we set $\theta(v_i)=\min(d^{in}_i,k)$ where $k$ was kept constant among all vertices for each trial and set at an integer in the interval $[1,10]$.  We evaluated the ability of a network to promote spreading under the tipping model based on the size of the set of nodes returned by our algorithm (as a percentage of total nodes).  For purposes of discussion, we have grouped our networks into three categories based on results (Figure~\ref{figFirst} and Table~\ref{figX}).  We have also included results for symmetric networks (Figure~\ref{figNonSym} and Table~\ref{dirResTable}).  In general, online social networks had the smallest seed sets - $13$ networks of this type had an average seed set size less than $2\%$ of the population (these networks were all in Category A).  We also noticed, that for most networks, there was a linear realtion between threshold value and seed size.

\begin{figure}
    \begin{center}
        \includegraphics[width=1\linewidth]{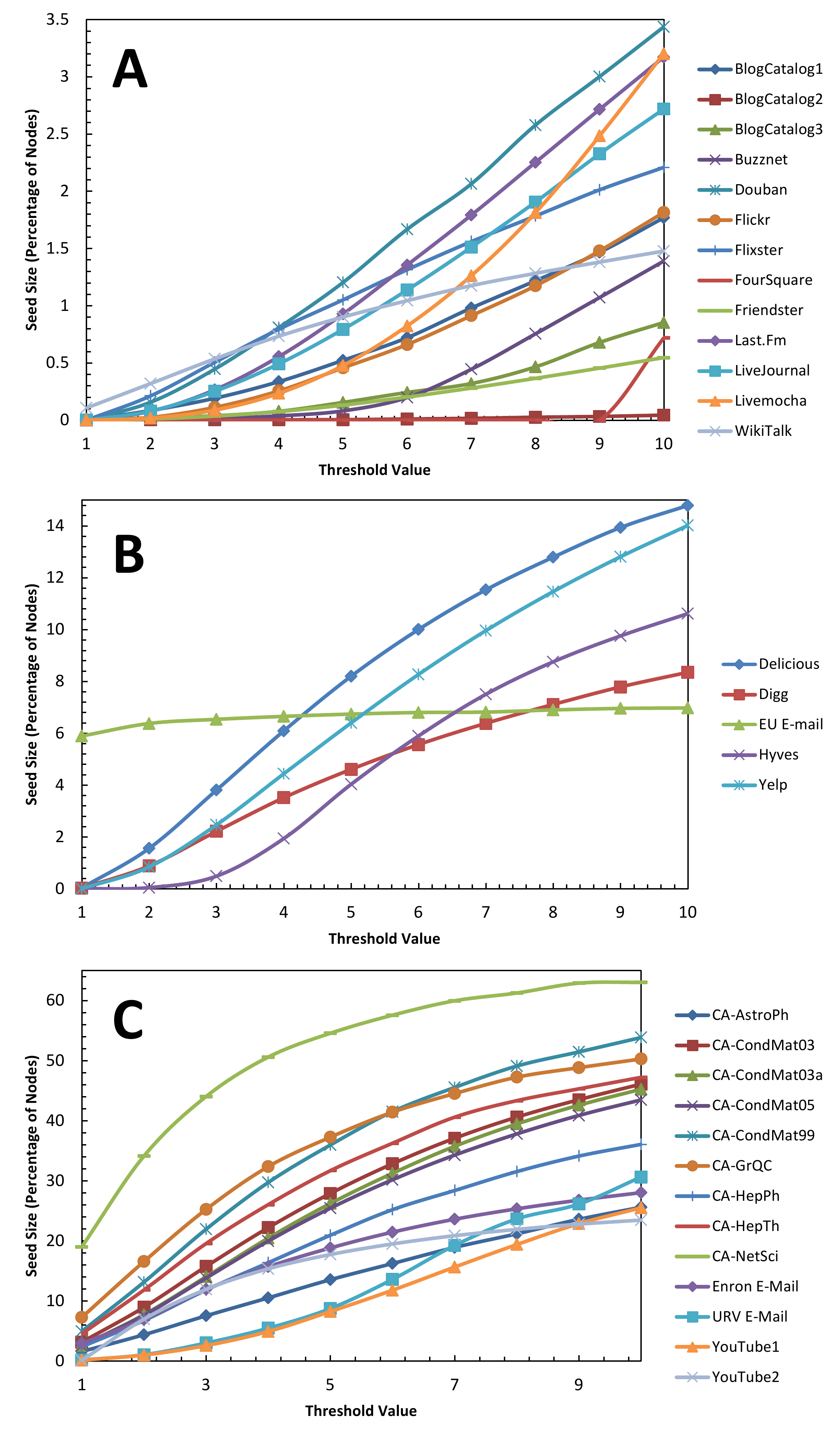}
    \end{center}
       \caption{Threshold value (assigned as an integer in the interval $[1,10]$) vs. size of initial seed set as returned by our algorithm in our three identified categories of networks (categories A-C are depicted in panels A-C respectively).  Average seed sizes were under $2\%$ for Categorty A, $2-10\%$ for Category B and over $10\%$ for Category C.  The relationship, in general, was linear for categories A and B and lograthimic for C.  CA-NetSci had the largest Louvain Modularity and clustering coefficient of all the networks.  This likely explains why that particular network seems to inhibit spreading.}
    \label{figFirst}
\end{figure}

\begin{figure}
    \begin{center}
        \includegraphics[width=1\linewidth]{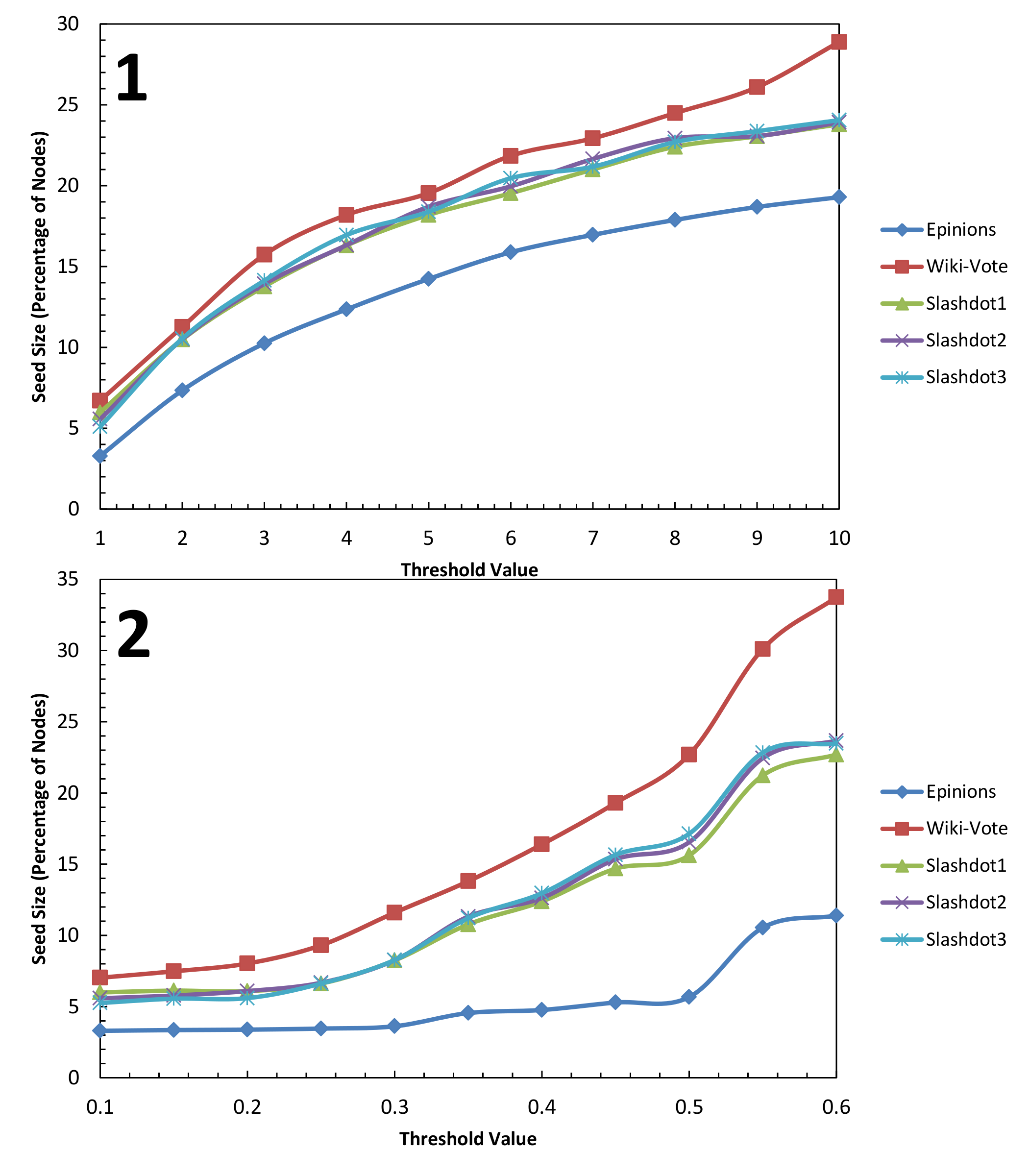}
    \end{center}
   \caption{Threshold value assigned as both an integer in the range $[1,10]$ (panel 1) as well as a fraction of node degree (panel 2) for the non-symmetric networks.}
    \label{figNonSym}
\end{figure}

Category A can be thought of as social networks highly susceptible to influence - as a very small fraction of initially activated individuals can lead to activation of the entire population.  All were extracted from social media websites.  For some of the lower threshold levels, the size of the set of seed nodes was particularly small.  For a threshold of three, $11$ of the Category A networks produced seeds smaller than $0.5\%$ of the total populationa.  For a threshold of four, nine networks met this criteria.

Networks in Category B are susceptible to influence with a relatively small set of initial nodes - but not to the extent of those in Category A.  They had an average initial seed size greater than $2\%$ but less than $10\%$. Members in this group included two general purpose social media networks, two specialty social media networks, and an e-mail network.  Non-symmetric networks generally perofrmed somewhat poorer than Category B networks (though in general, not as poorly as those in Category C).  The initial seed sizes for the non-symmmetric networks ranged from $3\%$ to $29\%$.

Category C consisted of networks that seemed to hamper diffusion in the tipping model, having an average initial seed size greater than $10\%$.  This category included all of the academic collaboration networks, two of the email networks, and two networks derived from friendship links on YouTube.

We also studied the effects on spreading when the threshold values were assigned as a specific fraction of each node's in-degree~\cite{jy05,wattsDodds07}, which results in heterogeneous $\theta_i$'s across the network.  We performed $12$ trials for each network.  Thresholds for each trial were based on the product of in-degree and a fraction in the interval $[0.05,0.60]$ (multiples of $0.05$).  The results (Figure~\ref{figFrac} and Table~\ref{figX}; for non-symmertic networks see Figure~\ref{figNonSym} and Table~\ref{dirResTable}) were analogous to our integer tests.  We also compared the averages over these trials with $M$ and $C$ and obtained similar results as with the other trials (Figure~\ref{main-fig} B).

\begin{figure}
    \begin{center}
        \includegraphics[width=1\linewidth]{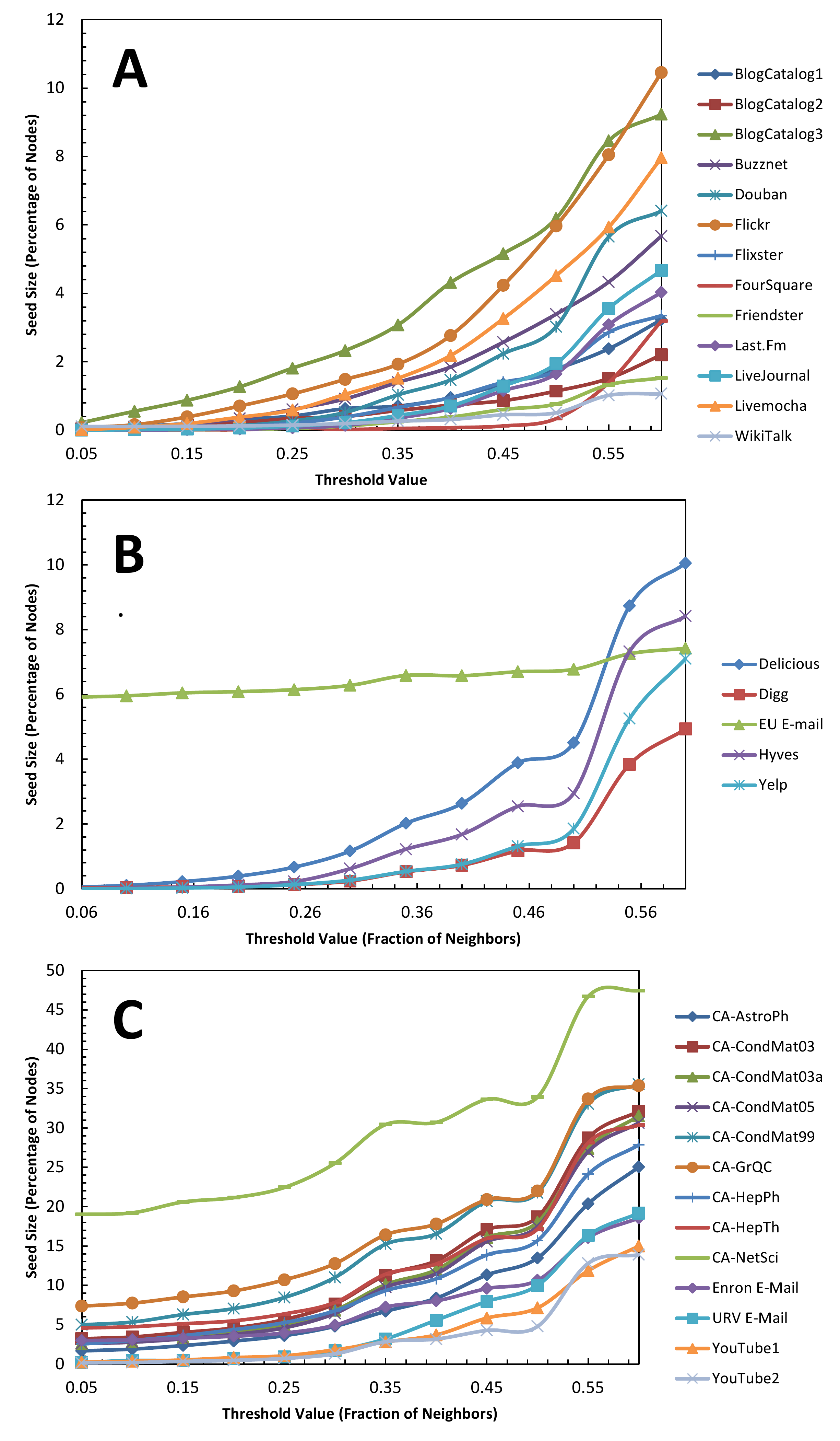}
    \end{center}
   \caption{Threshold value (assigned as a fraction of node in-degree as a multiple of $0.05$ in the interval $[0.05,0.60]$) vs. size of initial seed set as returned by our algorithm in our three identified categories of networks (categories A-C are depicted in panels A-C respectively, categories are the same as in Figure 1). Average seed sizes were under $5\%$ for Categorty A, $1-7\%$ for Category B and over $3\%$ for Category C.  In general, the relationship between threshold and initial seed size for networks in all categories was exponential.}
    \label{figFrac}
\end{figure}

\subsection{Comparison with Centrality Measures}
\label{cent-sec}

We compared our results with six popular centrality measures: degree, betweenness, closeness, shell number, eigenvector, and PageRank.  Here, we define degree centrality is simply the number of outgoing adjacent nodes.~\footnote{Note that in the symmetric networks we examined, our empirical results hold for the number of incoming adjacent edges as well as the total number of adjacent edges.}  The intuition behind high betweenness centrality nodes is that they function as ``bottlenecks'' as many paths in the network pass through them.  Hence, betweenness is a medial centrality measure.  Let $\sigma_{st}$ be the number of shortest paths between nodes $s$ and $t$ and $\sigma_{st}(v)$ be the number of shortest paths between $s$ and $t$ containing node $v$.  In \cite{freeman77}, betweenness centrality for node $v$ is defined as $\sum_{s\neq v \neq t}\frac{\sigma_{st}(v)}{\sigma_{st}}$.  In most implementations, including the ones used in this paper, the algorithm of \cite{brandes01} is used to calculate betweenness centrality.  Another common measure from the literature that we examined is closeness ~\cite{freeman79cent}.  Given node $i$, its closeness $C_c(i)$ is the inverse of the average shortest path length from node $i$ to all other nodes in the graph. Intuitively, closeness measures how ``close'' it is to all other nodes in a network.  Formally, if we define the shortest path between nodes $i$ to $j$ as function $d_G(i,j)$, we can express the average path length from $i$ to all other nodes as 
\begin{equation}
L_i = \frac{\sum_{j\in V\setminus i} d_G(i,j)}{|V|-1}.
\end{equation}
Hence, the closeness of a node can be formally written as 
\begin{equation}
C_c  (i) = \frac{1}{L_i} = \frac{|V|-1}{\sum_{j\in V\setminus i} d_G(i,j)}.
\end{equation}
The idea of shell number is based on a core to which a node lies in.  A $c$-core of a network is the subgraph in which every node is connected to the rest of the network by at least $c$ edges.  A node is assigned a shell number based on the maximal core that contains it.  This value can be derived exactly using shell decomposition~\cite{Seidman83}.  The eigenvector centrality~\cite{bona72} of a node is assigned based on the associated entry in the eigenvector of the adjacency matrix corresponding to the largest real eigenvalue.  The PageRank~\cite{Page98} for each node based on the PageRank of its neighbors.  An initial value for rank is considered for each node and the relationship is then computed iteratively until convergence is reached.  Intuitively, PageRank can be thought of as the importance of a node based on the importance of its neighbors.  We note that shell number, eigenvector, and PageRank are often associated with diffusion processes. A more complete discussion of centrality measures can be found in \cite{wasserman1994social}.

We evaluated the performance of centrality measures in finding a seed set by iteratively selecting the most central nodes with respect to a given measure until the $\Gamma_\theta$ of that set returns the set of all nodes.  Due to the computational overhead of calculating these centrality measures and the repeated re-evaluation of $\Gamma_\theta$, we limited this comparison to only \textbf{BlogCatalog3}, \textbf{CA-HepTh}, \textbf{CA-NetSci}, \textbf{URV E-Mail}, and \textbf{Douban} (no betweeness calcualted for \textbf{Douban}).  As with the experiments in the previous section, we studied threshold settings based on an integer in the interval $[1,10]$ (see Figure~\ref{centInt}) and as a fraction of incoming neighbors in the interval $[0.05,0.60]$ (multiples of $0.05$, see Figure~\ref{centFrac}).  In general, selecting highly-central nodes is an inefficient method for finding small seed sets.

In all but the lowest threshold settings, the use of centrality measures for the integer-threshold trials proved to significantly underperformed when the method presented in this paper - often returning seed-sets several orders of magnitude larger and in many cases including the majority of nodes in the network.  Even for the centrality measures outperformed our method in these trials, the reduction in seed set size was minimal (the greatest reduction in seed set size experienced in a centrality-measure test over the algorithm of this paper was $0.09\%$, while often producing seed sets orders of magnitude greater than our method).  This held even for the centrality measures associated with diffusion (shell number, eigenvector, and PageRank).

\begin{figure}
    \begin{center}
        \includegraphics[width=1\linewidth]{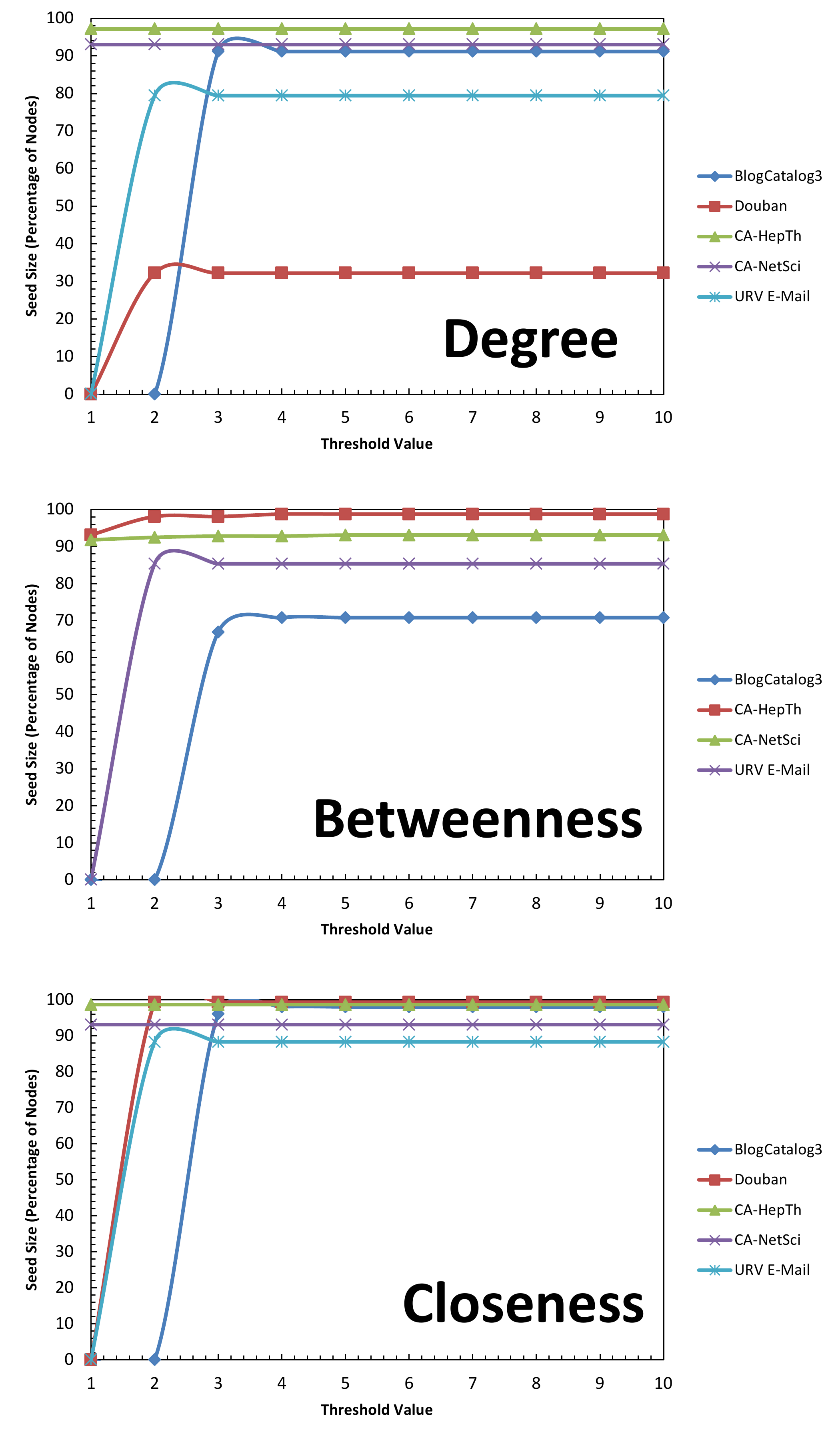}
    \end{center}
   \caption{The use of degree, betweenness and closeness to find seed-sets on select networks when the threshold is set to an integer in the interval $[1,10]$.  For these trials, centrality measure returned significantly larger (several orders of magnitude) larger seed sets than our approach.}
    \label{centInt}
\end{figure}

\begin{figure}
    \begin{center}
        \includegraphics[width=1\linewidth]{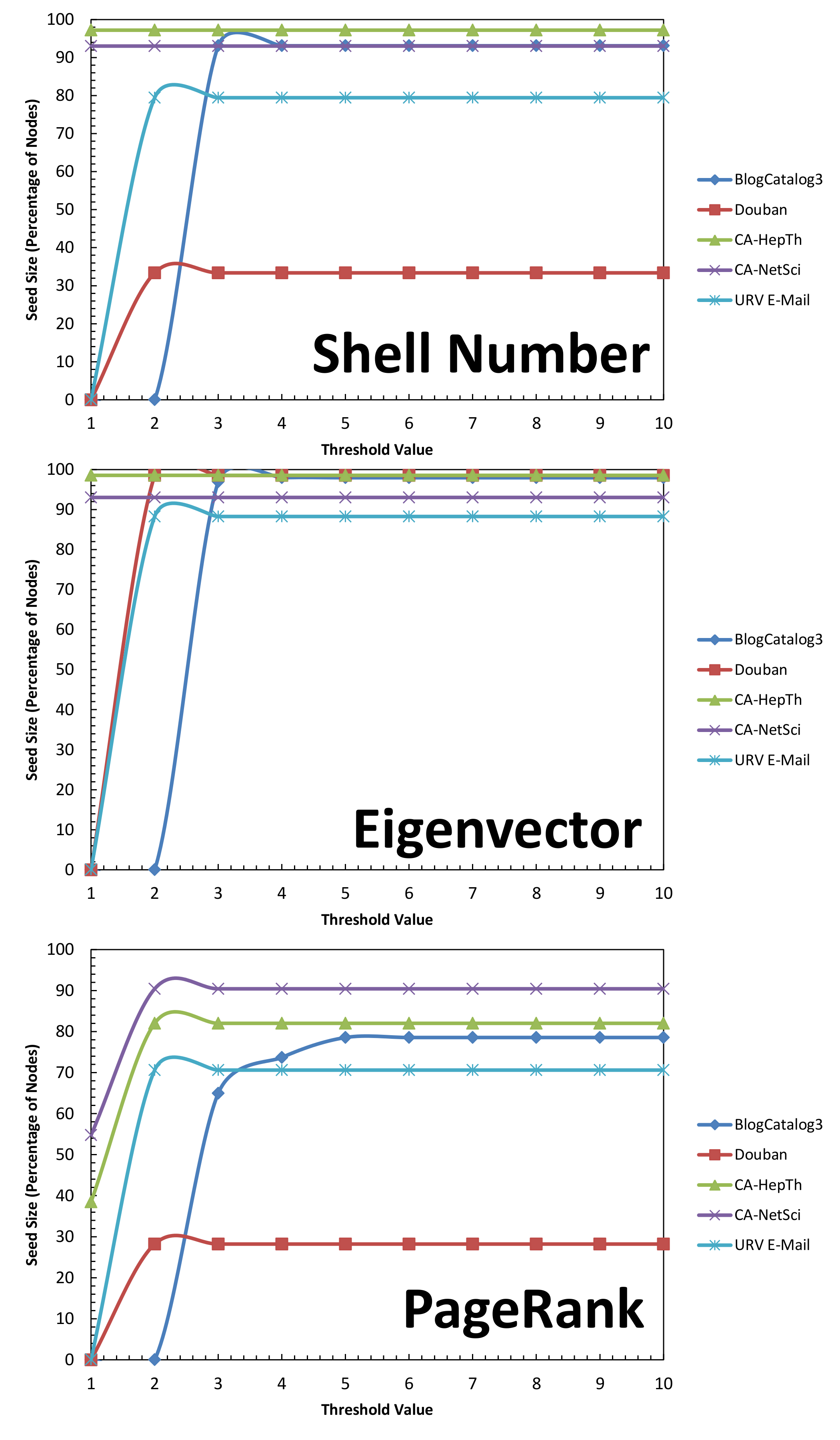}
    \end{center}
   \caption{The use of shell number, Eigenvector, and PageRank  to find seed-sets on select networks when the threshold is set to an integer in the interval $[1,10]$.}
    \label{specInt}
\end{figure}

Our tests using fractional-based thresholds tell a slightly different story.  While our method still generally outperformed the centrality measures for the fractional tests, there were a few cases where the centrality measures fared better.  With \textbf{BlogCatalog3} all of the centrality measures outperformed our algorithm in the fraction-based experiments.  For that dataset, centrality-based algorithm consistently outperformed our method finding seed sets with less members (by $3.13-3.29\%$ of the population, on average).  With \textbf{URV-Email}, many trials that utilized a lower threshold setting outperformed our method, but never finding a feed set with smaller by more than $8\%$ of the total population.  However, in the larger threshold settings, our method consistently found smaller seeds.  For a given centrality measure for this dataset, centrality measures on average provided poorer results than our algorithm ranged - returning seed sets which were, on average $10.22-67.14\%$ (by overall population) larger than that returned by our algorithm.  Perhaps the most interesting result among the centrality measures were the PageRank fraction-based tests on \textbf{CA-NetSci}, which is associated with the largest seed sets.  PageRank found seed sets that were, on average $14.45\%$ smaller (by population) than our approach.  Additionally, though centrality measures outperformed \textsf{TIP\_DECOMP} for \textbf{BlogCatalog3}, this does not appear to hold for all social networks as the seed sets returned using centrality measures for the \textsf{Douban} approaches at least an order of magnitude increase over our method for nearly every fractional threshold setting for all centrality measures.  Hence, we conclude that for fraction-based thresholds, using centrality measures to find seed sets provides inconsistent results, and when it fails, it tends to provide a large portion of the network.  A possibility for a practical algorithm that could combine both methods would be to first run \textsf{TIP\_DECOMP}, returning some set $V'$.  Then, create set $V''$ by selecting the most central nodes until either $|V'| = |V''|$ or $\Gamma_\theta(V'')=V$ (whichever ensures the lower cardinality for $V''$.  If $|V'| = |V''|$, return $V'$, otherwise return $V''$.  For such an approach, we would likely recommend using degree centrality due to its ease of computation and performance in our experiments.  However, we note that highly-central nodes often may not be realistic targets for a viral-marketing campaign.  For instance, it may be cost-prohibitive to create a seed set consisting of major celebrities in order to spread the use of a product.  As such is a practical concern, we look at the performance of \textsf{TIP\_DECOMP} when high-degree nodes are removed in the next section.

\begin{figure}
    \begin{center}
        \includegraphics[width=1\linewidth]{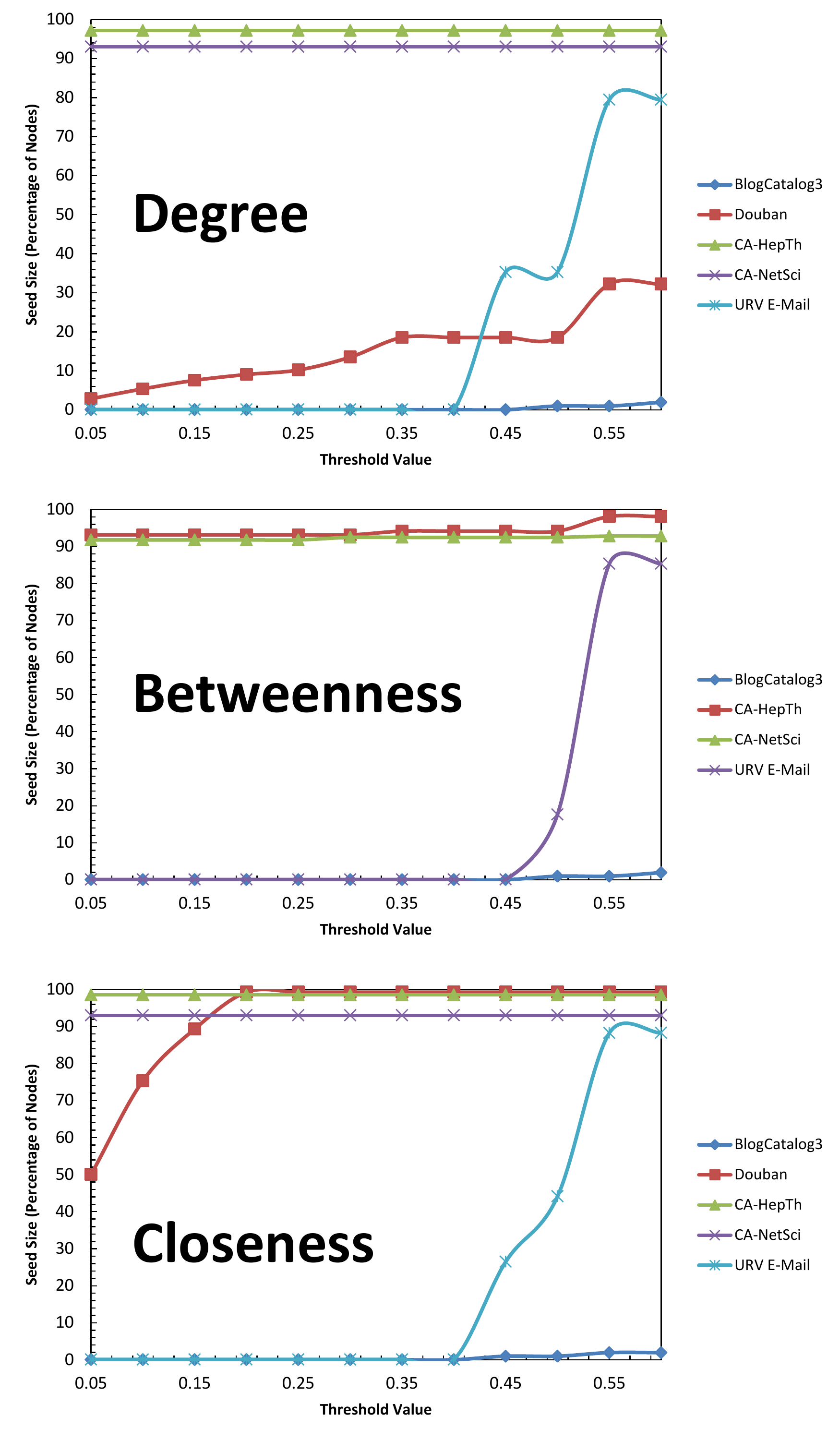}
    \end{center}
   \caption{The use of degree, betweenness and closeness to find seed-sets on select networks when the threshold is set to an fraction in the interval $[0.05,0.60]$ (multiples of $0.05$).}
    \label{centFrac}
\end{figure}

\begin{figure}
    \begin{center}
        \includegraphics[width=1\linewidth]{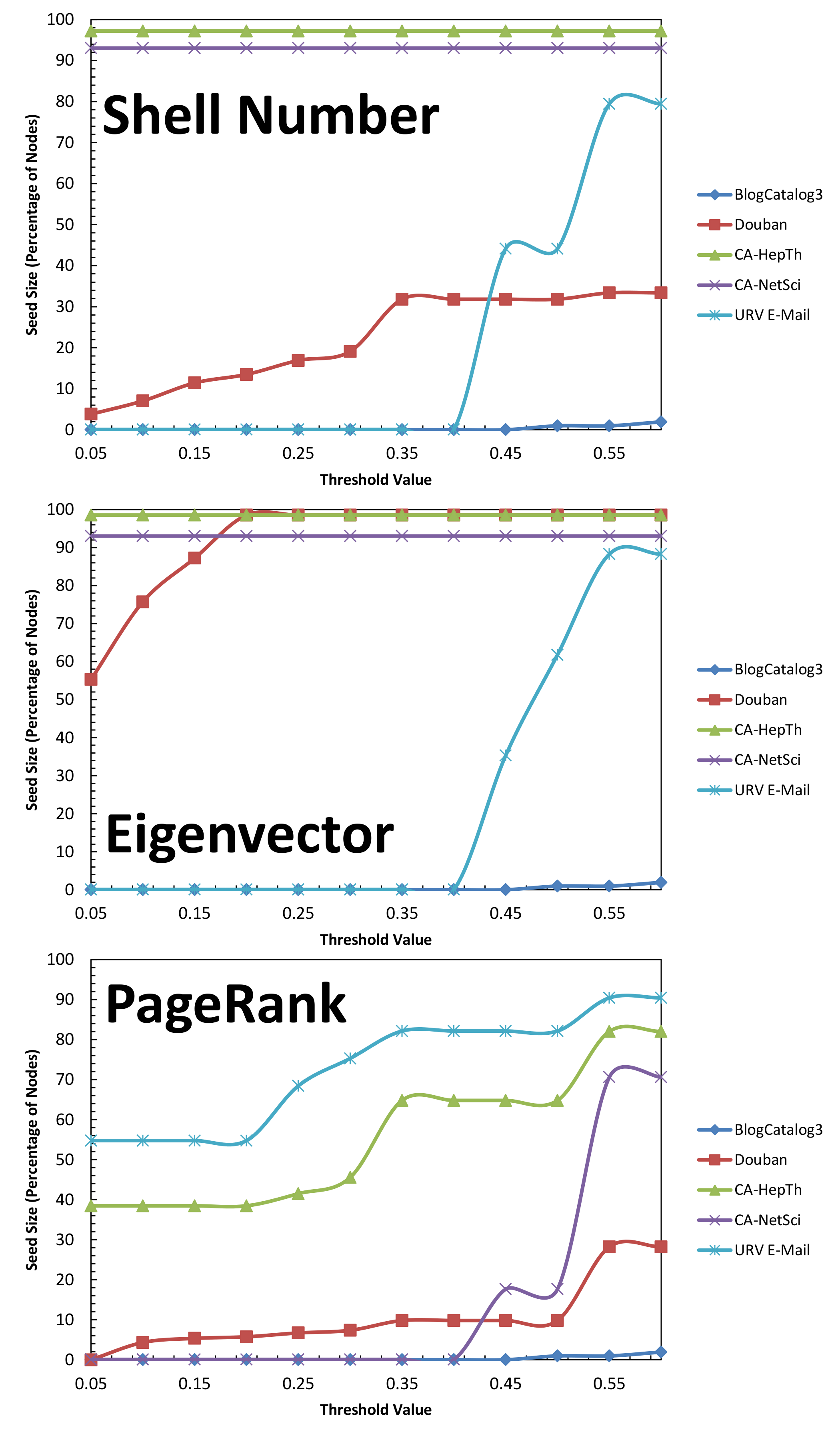}
    \end{center}
   \caption{The use of shell number, Eigenvector, and PageRank to find seed-sets on select networks when the threshold is set to an fraction in the interval $[0.05,0.60]$ (multiples of $0.05$).}
    \label{specFrac}
\end{figure}


\subsection{The Speed of the Activation Process and Sets of ``Critical Mass''}
\label{speedSec}

An important aspect to consider in viral marketing is the speed of the activation process.  We illustrate this speed for several networks under a threshold of $2$ as well as a majority threshold (half of each nodes neighbors) in Figure~\ref{durTest}.  Interestingly, we found that the size of the initial seed set was not indicative of the speed of spreading.  For instance, in \textbf{BlogCatalog3}, a Category A network (for which our algorithm found a very small seed set) the activation process proceeded quickly when compared to the others examined.  However, this was also true for \textbf{CA-NetSci}, a Category C network (large seed set).  Conversely, the activation process in the \textbf{Douban} and \textbf{CA-HepTh} networks (also Category A and C, respectively) proceeded more slowly than the rest.

Another interesting feature we learned in exploring the speed of the activation process was that in all of our experiments there was a single time step where the number of activated nodes increased significantly more than the other time periods - sometimes by several orders of magnitude (see Figure~\ref{grtSingIncr}).  We can think of such a set of activated nodes as when the population reaches a ``critical mass'' which results in mass adoption in the next interval.  In many cases, such a critical mass is reached early - normally in the first few time-steps.

Finding a subset of the population of ``critical mass'' may be an important problem in its own right.  Though the critical mass point will often be larger than the seed set found by an algorithm in this paper, we can be assured that in one time step of the model,  the number of individuals reached (with a certain number of signals from their neighbors) is substantially larger than the investment.  In practice, this could lead to quicker spreading of information in an advertising campaign, for example.  Further, our experiments indicate that order-of-magnitude critical mass sets exist in several networks.  We are currently conducting further research on this topic.

\begin{figure}
    \begin{center}
        \includegraphics[width=1\linewidth]{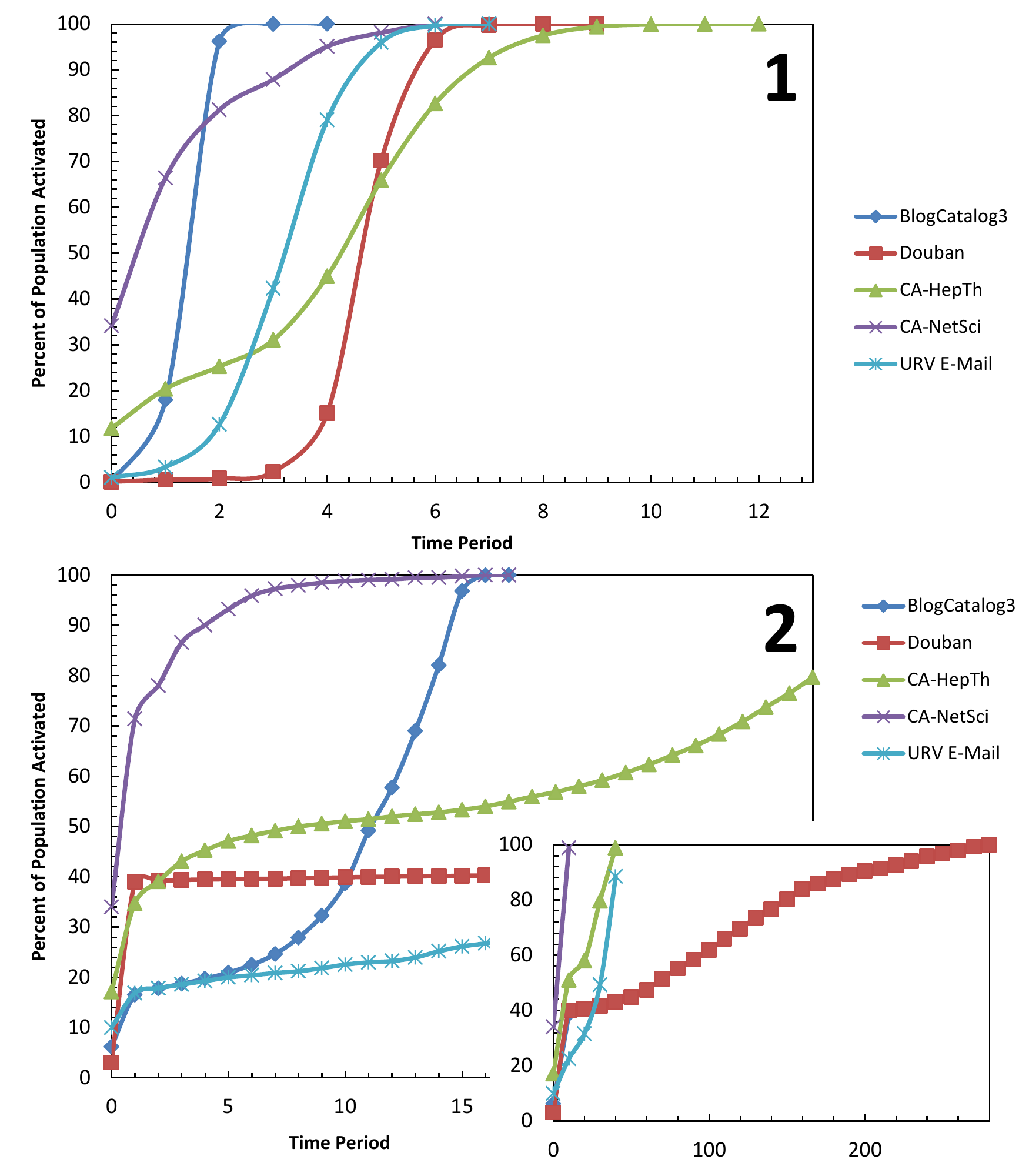}
    \end{center}
   \caption{An examination of several of speed of activation initiated from the seed set using a threshold of two (panel 1) and a majority threshold (panel 2).}
    \label{durTest}
\end{figure}

\begin{figure}
    \begin{center}
        \includegraphics[width=1\linewidth]{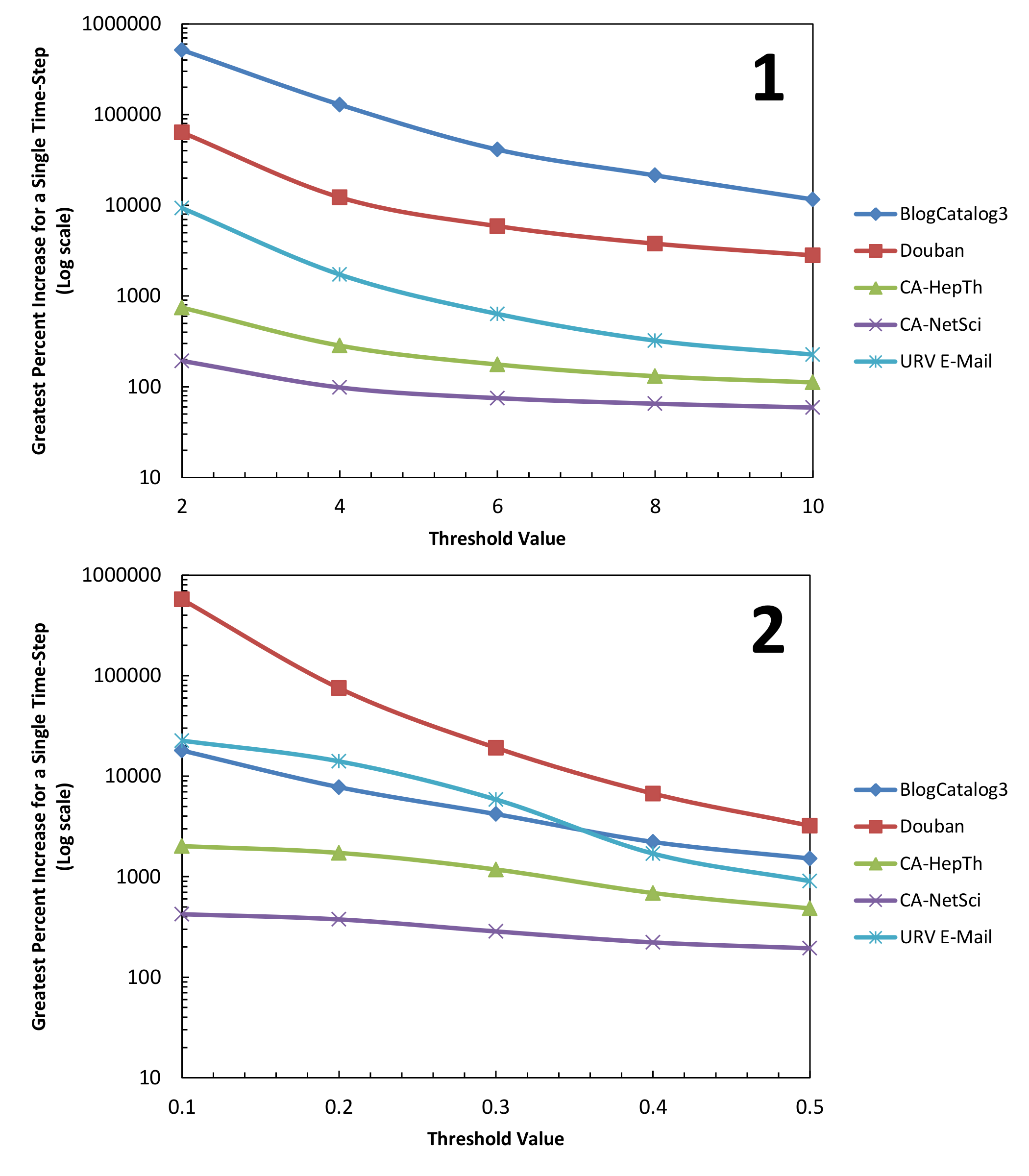}
    \end{center}
   \caption{Greatest Percent increase experienced in a single time step (the effect of reaching ``critical mass'') for integer-based and percentage-based thresholds (panel 1 and 2 respectively).}
    \label{grtSingIncr}
\end{figure}


\subsection{Effect of Removing High-Degree Nodes}
\label{robust-sec}

In the last section we noted that high-degree nodes may not always be targetable in a viral marketing campaign (i.e. it may be cost prohibitive to include them in a seed set).  In this section, we explore the affect of removing high-degree nodes on the size of the seed-set returned by \textsf{TIP\_DECOMP}.  This type of node removal has also recently been studied in a different context in \cite{nodeRemRef}.  In these trials, we studied all $31$ networks and looked at two specific threshold settings: an integer threshold of $2$ (Figure~\ref{twoRemTest}) and a fractional threshold of $0.5$ (Figure~\ref{fifRemTest}).  We then studied the effect of removing up to $50\%$ of the nodes in order from greatest to least degree.

With an integer threshold of $2$, networks in category A still retained a seed-size (as returned by \textsf{TIP\_DECOMP}) two orders of magnitude smaller than the population size up to the removal of $10\%$ of the top degree nodes, and for many networks this was maintained to $50\%$.  Networks in category B retained seed sets an order of magnitude smaller than the population for up to $50\%$ of the nodes removed.  For most networks in category C, the seed size remained about a quarter of the population size up to $15\%$ of the top degree nodes being removed.

\begin{figure}
    \begin{center}
        \includegraphics[width=1\linewidth]{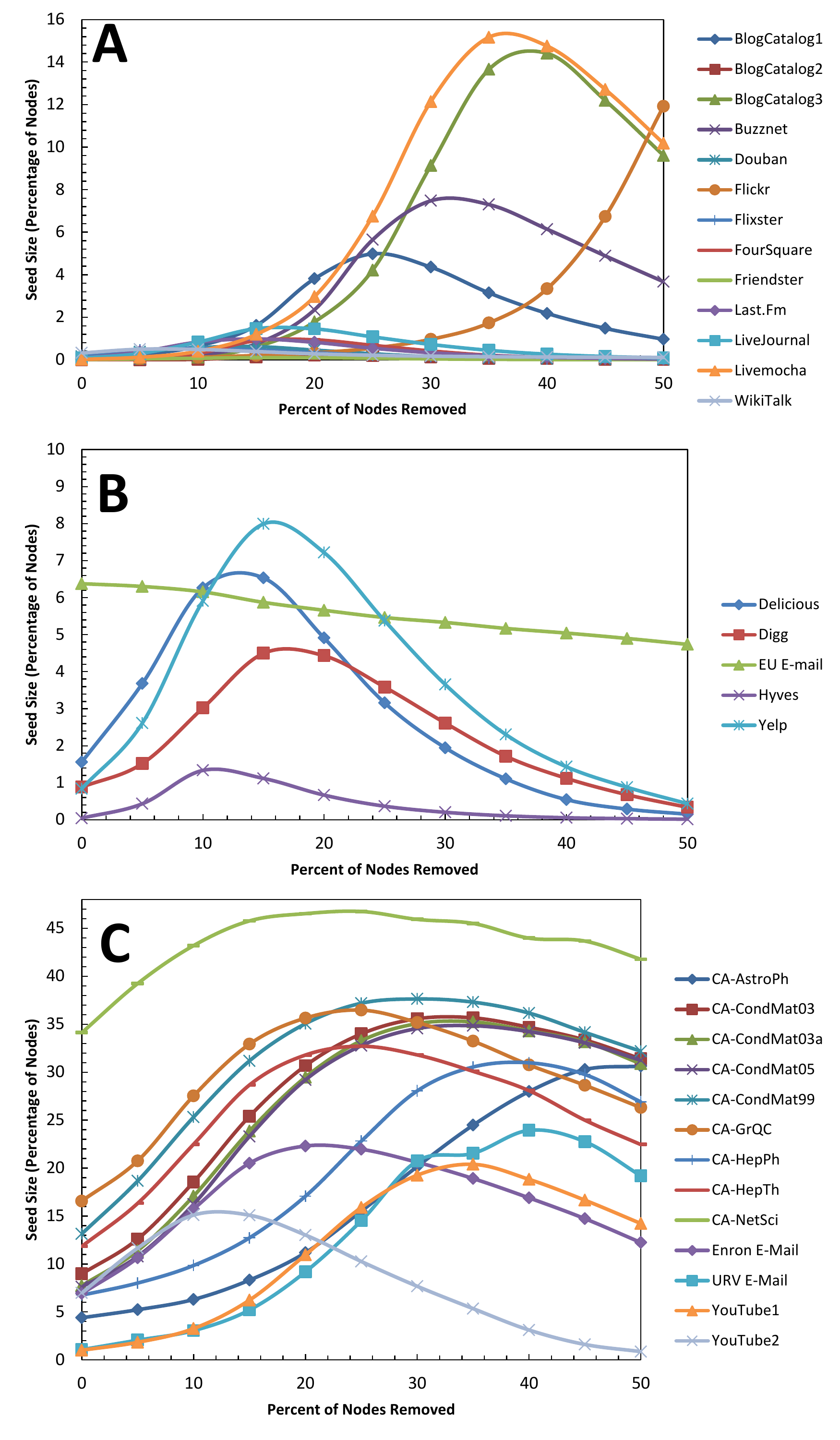}
    \end{center}
   \caption{Size of the seed set returned by \textsf{TIP\_DECOMP} (as a fraction of the popualtion) as a function of the percent of the highest degree nodes removed from the network with an integer theshold of $2$ for networks in categories A-C.}
    \label{twoRemTest}
\end{figure}

With a fractional threshold of $0.5$, we noted that many networks in category A actually had larger seed sets (as returned by \textsf{TIP\_DECOMP}) when more high degree nodes are removed.  Further, networks in categories A-B retained seed sets of at least an order of magnitude smaller than the population in these tests while most networks in category C retained sizes of about a quarter of the population.

\begin{figure}
    \begin{center}
        \includegraphics[width=1\linewidth]{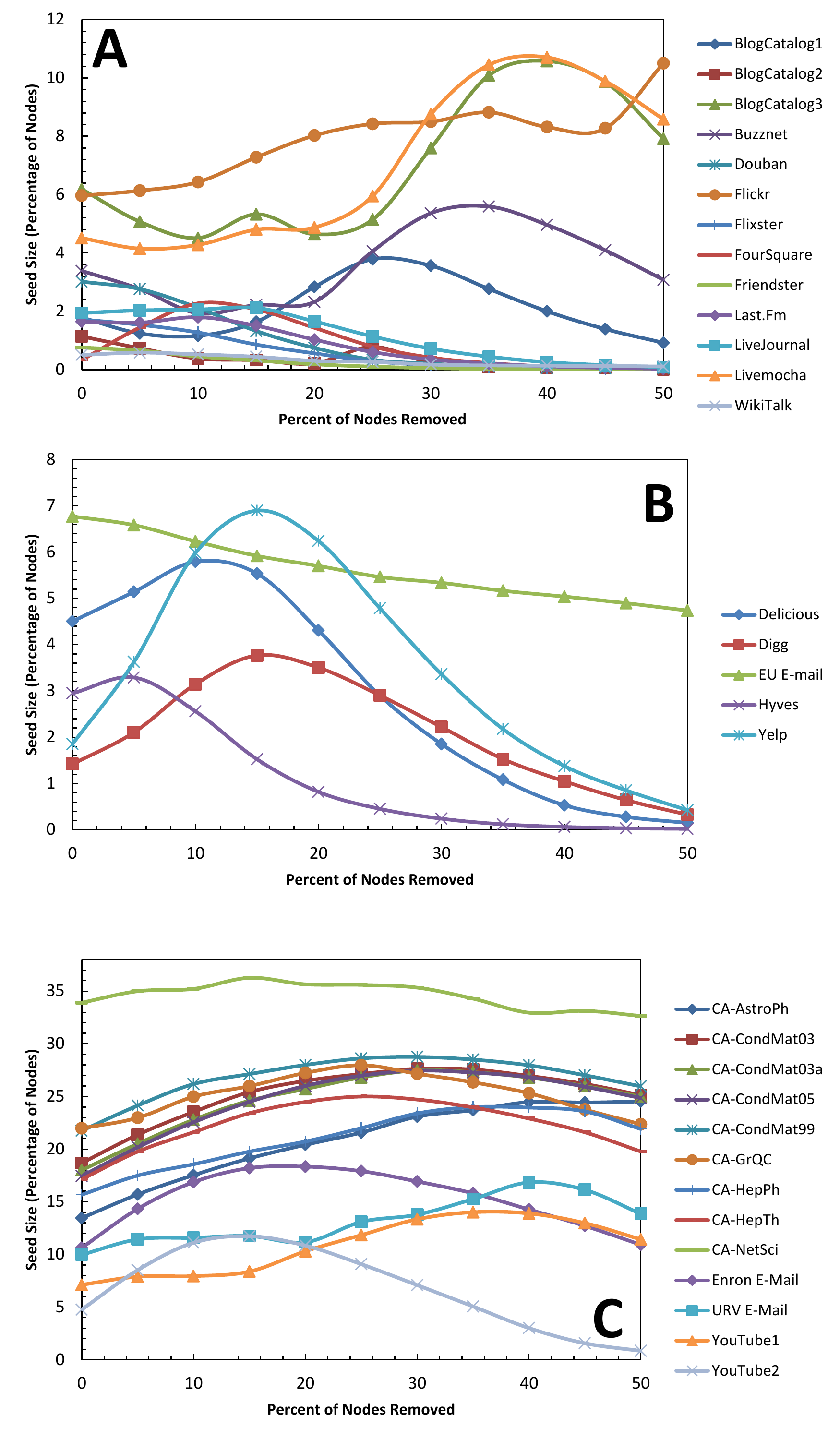}
    \end{center}
   \caption{Size of the seed set returned by \textsf{TIP\_DECOMP} (as a fraction of the popualtion) as a function of the percent of the highest degree nodes removed from the network with an fractional theshold of $0.5$ for networks in categories A-C.}
    \label{fifRemTest}
\end{figure}

\subsubsection{Seed Size as a Function of Community Structure}
\label{comm-sec}

In this section, we view the results of our heuristic algorithm as a measurement of how well a given network promotes spreading.  Here, we use this measurement to gain insight into which structural aspects make a  network more likely to be ``tipped.''  We compared our results with two network-wide measures characterizing community structure.  First, clustering coefficient ($C$) is defined for a node as the fraction of neighbor pairs that share an edge - making a triangle.  For the undirected case, we define this concept formally below.

\begin{definition}[Clustering Coefficient]
Let $ r $ be the number of edges between nodes with which $ v_i $ has an edge and $d_i$ be the degree of $v_i$.  The \textbf{clustering coefficient}, $ C_i = \dfrac {2r}  {d_i(d_i - 1)} $.
\end{definition}

Intuitively, a node with high $C_i$ tends to have more pairs of friends that are also mutual friends.  We use the average clustering coefficient as a network-wide measure of this local property.

Second, we consider modularity ($M$) defined by Newman and Girvan.~\cite{newman04}.  For a partition of a network, $M$ is a real number in $[-1,1]$ that measures the density of edges within partitions compared to the density of edges between partitions.  We present a formal definition for an undirected network below.

\begin{definition}[Modularity~\cite{newman04}]
Given partition $C= \{c_1,\ldots,c_q\}$, \textbf{ modularity}, 
\[
M = \dfrac 1 {2m} \sum_{c \in C}\sum_{i,j \in c}w_{ij}-P_{ij}
\]
where $m$ is the number of undirected edges; $w_{ij}=1$ if there is an edge between nodes $i$ and $j$ and $w_{ij}=0$ otherwise; $P_{ij}=\frac{k_i k_j}{2m}$
\end{definition}


The modularity of an optimal network partition can be used to measure the quality of its community structure.  Though modularity-maximization is NP-hard, the approximation algorithm of Blondel et al.~\cite{blondel08} (a.k.a. the ``Louvain algorithm'') has been shown to produce near-optimal partitions.\footnote{Louvain modularity was computed using the implementation available from CRANS at  http://perso.crans.org/aynaud/communities/.}  We call the modularity associated with this algorithm the ``Louvain modularity.''  Unlike the $C$, which describes local properties, $M$ is descriptive of the community level.  For the $31$ networks we considered, $M$ and $C$  appear uncorrelated ($R^2 = 0.0538$, $p=0.2092$).

We plotted the initial seed set size ($S$) (from our algorithm - averaged over the $10$ threshold settings) as a function of $M$ and $C$ (Figure~\ref{main-fig}a) and uncovered a correlation (planar fit, $R^2=0.8666$, $p=5.666 \cdot 10^{-13}$, see Figure~\ref{main-fig} A).  The majority of networks in Category C (less susceptible to spreading) were characterized by relatively large $M$ and $C$ (Category C includes the top nine networks w.r.t. $C$ and top five w.r.t. $M$).  Hence, networks with dense, segregated, and close-knit communities (large $M$ and $C$) suppress spreading.  Likewise, those with low $M$ and $C$ tended to promote spreading.  Also, we note that there were networks that promoted spreading with dense and segregated communities, yet were less clustered (i.e. Category A networks Friendster and LiveJournal both have $M\geq 0.65$ and $C \leq 0.13$).  Further, some networks with a moderately large clustering coefficient were also in Category A (two networks extracted from BlogCatalog had $C\geq 0.46$) but had a relatively less dense community structure (for those two networks $M \leq 0.33$).

\begin{figure}
    \begin{center}
        \includegraphics[width=1\linewidth]{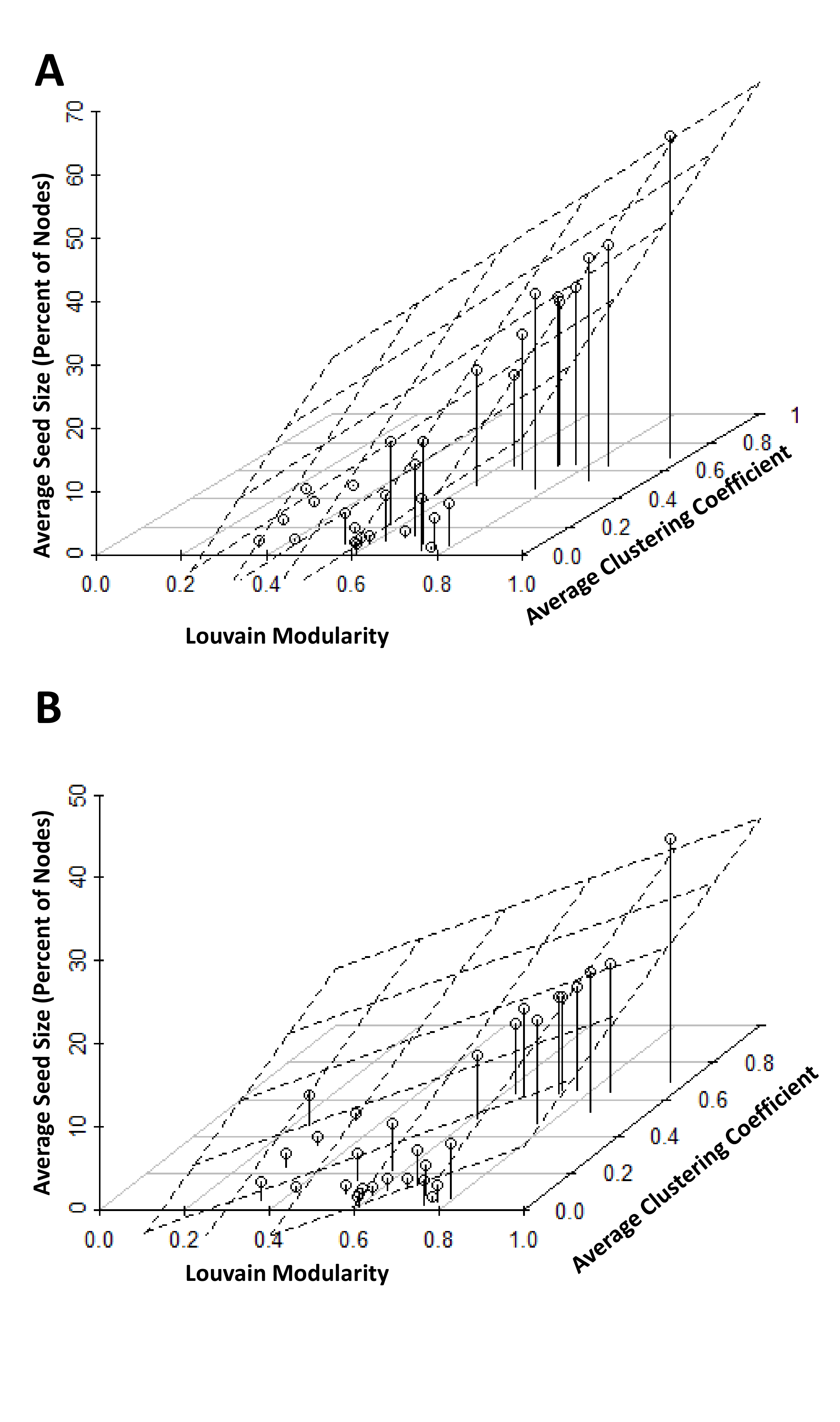}
    \end{center}
    \caption{\textbf{(A)} Louvain modularity ($M$) and average clustering coefficient ($C$) vs. the average seed size ($S$).  The planar fit depicted is $S=43.374 \cdot M +  33.794 \cdot C - 24.940$ with $R^2=0.8666$, $p=5.666 \cdot 10^{-13}$.  \textbf{(B)} Same plot at (A) except the averages are over the 12 percentage-based threshold values.  The planar fit depicted is $S=18.105 \cdot M + 17.257 \cdot C - 10.388$ with $R^2=0.816$, $p=5.117 \cdot 10^{-11}$.}
    \label{main-fig}
\end{figure}

\begin{table}[ht]
     \begin{center}
        \includegraphics[width=.8\linewidth]{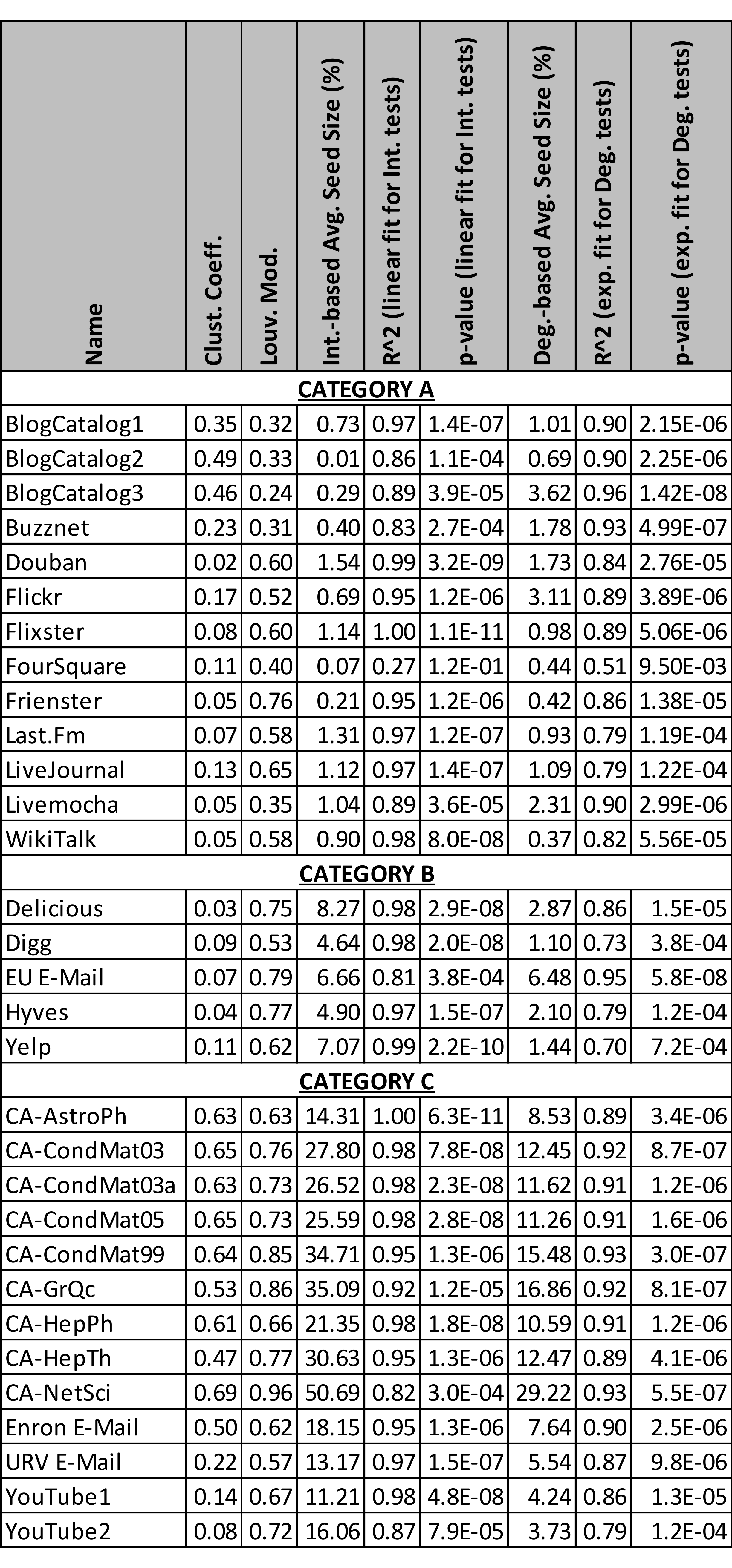}
    \end{center}
    \caption{Regression analysis and network-wide measures for the networks in Categories A, B, and C.}
    \label{figX}
\end{table}

\begin{table}[h]
     \begin{center}
        \includegraphics[width=.8\linewidth]{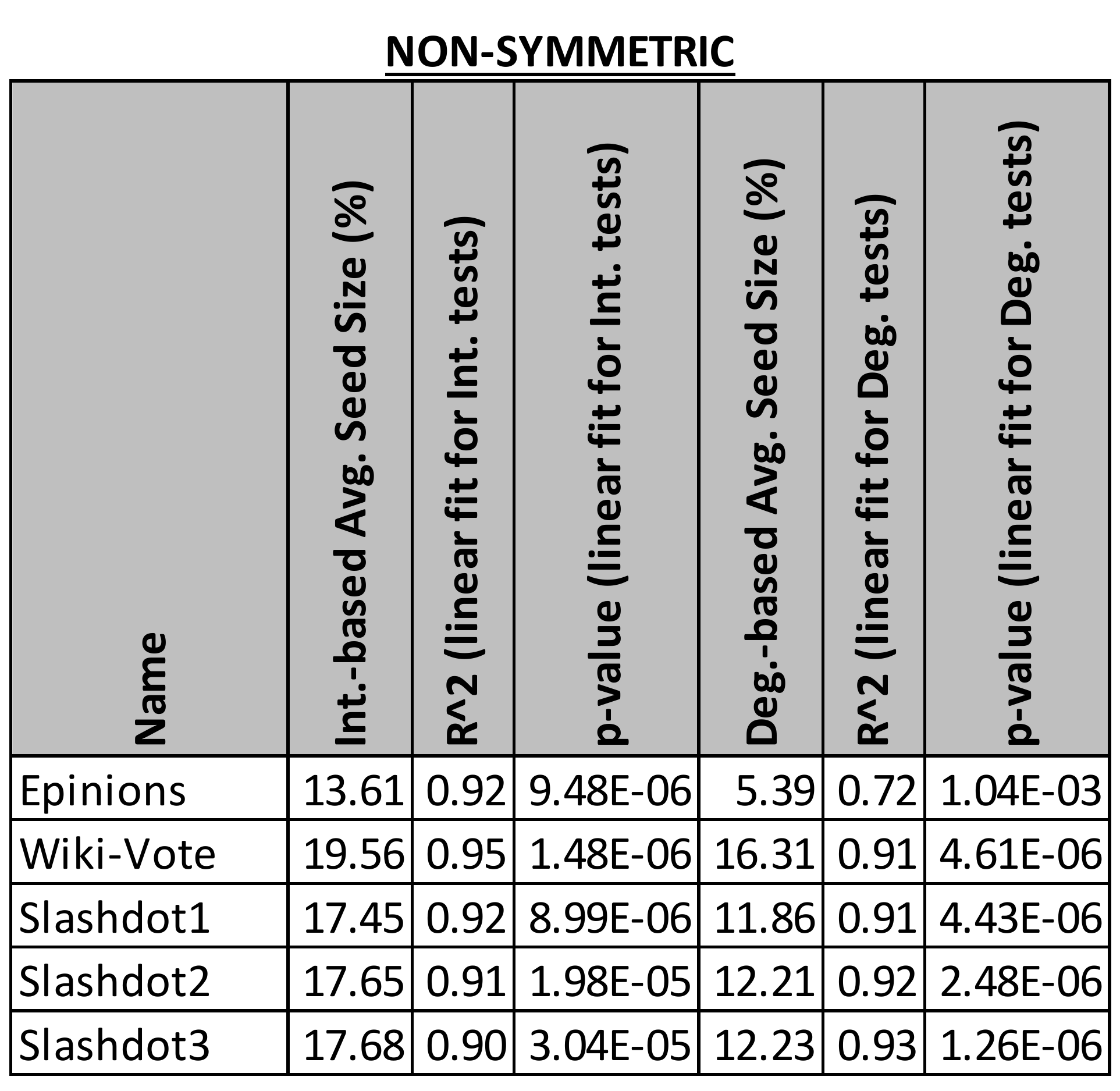}
    \end{center}
      \caption{Regression analysis and network-wide measures for the non-symmetric networks.}
    \label{dirResTable}
\end{table}

\section{Related Work}
\label{rw-sec}
Tipping models first became popular by the works of \cite{Gran78} and \cite{Schelling78} where it was presented primarily in a social context.  Since then, several variants have been introduced in the literature including the non-deterministic version of \cite{kleinberg} (described later in this section) and a generalized version of \cite{jy05}.  In this paper we focused on the deterministic version.  In \cite{wattsDodds07}, the authors look at deterministic tipping where each node is activated upon a percentage of neighbors being activated.  Dryer and Roberts \cite{Dreyer09} introduce the MIN-SEED problem, study its complexity, and describe several of its properties w.r.t. certain special cases of graphs/networks.  The hardness of approximation for this problem is described in \cite{chen09siam}.  The work of \cite{benzwi09} presents an algorithm for target-set selection whose complexity is determined by the tree-width of the graph - though it provides no experiments or evidence that the algorithm can scale for large datasets.  The recent work of \cite{reichman12} proves a non-trivial upper bound on the smallest seed set.

Our algorithm is based on the idea of shell-decomposition that currently is prevalent in physics literature.  In this process, which was introduced in \cite{Seidman83}, vertices (and their adjacent edges) are iteratively pruned from the network until a network ``core'' is produced.  In the most common case, for some value $k$, nodes whose degree is less than $k$ are pruned (in order of degree) until no more nodes can be removed.  This process was used to model the Internet in \cite{ShaiCarmi07032007} and find key spreaders under the SIR epidemic model in \cite{InfluentialSpreaders_2010}.  More recently, a ``heterogeneous'' version of decomposition was introduced in \cite{baxter11} - in which each node is pruned according to a certain parameter - and the process is studied in that work based on a probability distribution of nodes with certain values for this parameter.

\subsection{Notes on Non-Deterministic Tipping}
We also note that an alternate version of the model where the thresholds are assigned randomly has inspired approximation schemes for the corresponding version of the seed set problem.\cite{kleinberg,leskovec07,chen10}  Work in this area focused on finding a seed set of a certain size that maximizes the expected number of adopters.  The main finding by Kempe et al., the classic work for this model, was to prove that the expected number of adopters was submodular - which allowed for a greedy approximation scheme.  In this algorithm, at each iteration, the node which allows for the greatest increase in the expected number of adopters is selected.  The approximation guarantee obtained (less than $0.63$ of optimal) is contingent upon an approximation guarantee for determining the expected number of adopters - which was later proved to be $\#P$-hard.~\cite{chen10}  Recently, some progress has been made toward finding a guarantee~\cite{Borgs12}.  Further, the simulation operation is often expensive - causing the overall time complexity to be $O(x \cdot n^2)$ where $x$ is the number of runs per simulation and $n$ is the number of nodes (typically, $x>n$).  In order to avoid simulation, various heuristics have been proposed, but these typically rely on the computation of geodesics - an $O(n^3)$ operation - which is also more expensive than our approach.

Additionally, the approximation argument for the non-deterministic case does not directly apply to the original (deterministic) model presented in this paper.  A simple counter-example shows that sub-modularity does not hold here. Sub-modularity (diminishing returns) is the property leveraged by Kempe et al. in their approximation result.

\subsection{Note on an Upper Bound of the Initial Seed Set}

Very recently, we were made aware of research by Daniel Reichman that proves an upper bound on the minimal size of a seed set for the special case of undirected networks with homogeneous threshold values.~\cite{reichman12}  The proof is constructive and yields an algorithm that mirrors our approach (although Reicshman's algorithm applies only to that special case).  We note that our work and the work of Reichman were developed independently.  We also note that Reichman performs no experimental evaluation of the algorithm.

Given undirected network $G$ where each node $v_i$ has degree $d_i$ and the threshold value for all nodes is $k$, Reichman proves that the size of the minimal seed set can be bounded by $\sum_i \min\{1, \frac{k}{d_i+1}\}$.  For our integer tests, we compared our results to Reichman's bound.  Our seed sets were considerably smaller - often by an order of magnitude or more.  See Figure~\ref{fig4a} for details.

\section{Conclusion}

As recent empirical work on tipping indicates that it can occur in real social networks,\cite{centola10,zhang11} our results are encouraging for viral marketers.  Even if we assume relatively large threshold values, small initial seed sizes can often be found using our fast algorithm - even for large datasets.  For example, with the FourSquare online social network, under majority threshold ($50\%$ of incoming neighbors previously adopted), a viral marketeer could expect a $297$-fold return on investment.  As results of this type seem to hold for many online social networks, our algorithm seems to hold promise for those wishing to ``go viral.''  An important open question to address in future work is if a similar decomposition-based approach is viable for finding seed sets under other diffusion models, such as the independent cascade model~\cite{kleinberg} and evolutionary graph theory~\cite{lieberman_evolutionary_2005} as well as probabilistic variants of the tipping model and diffusion processes on multi-modal networks~\cite{snops-iclp}.  Exploring other models can lead to the development of decomposition algorithms in domains where social behavior is more dynamic such as cell-phone networks~\cite{dyag13,otherPho}.

\begin{acknowledgements}
We would like to thank Gaylen Wong (USMA) for his technical support.  Additionally, we would like to thank (in no particular order) Albert-L\'{a}szl\'{o} Barab\'{a}si (NEU), Sameet Sreenivasan (RPI), Boleslaw Szymanski (RPI), Patrick Roos (UMD), John James (USMA), and Chris Arney (USMA) for their discussions relating to this work.  Finally, we would also like to thank Megan Kearl, Javier Ivan Parra, and Reza Zafarani of ASU for their help with some of the datasets.  
The authors are supported under by the Army Research Office (project 2GDATXR042) and the Office of the Secretary of Defense (project F1AF262025G001).  The opinions in this paper are those of the authors and do not necessarily reflect the opinions of the funders, the U.S. Military Academy, or the U.S. Army.
\end{acknowledgements}

\begin{figure}
    \begin{center}
        \includegraphics[width=1\linewidth]{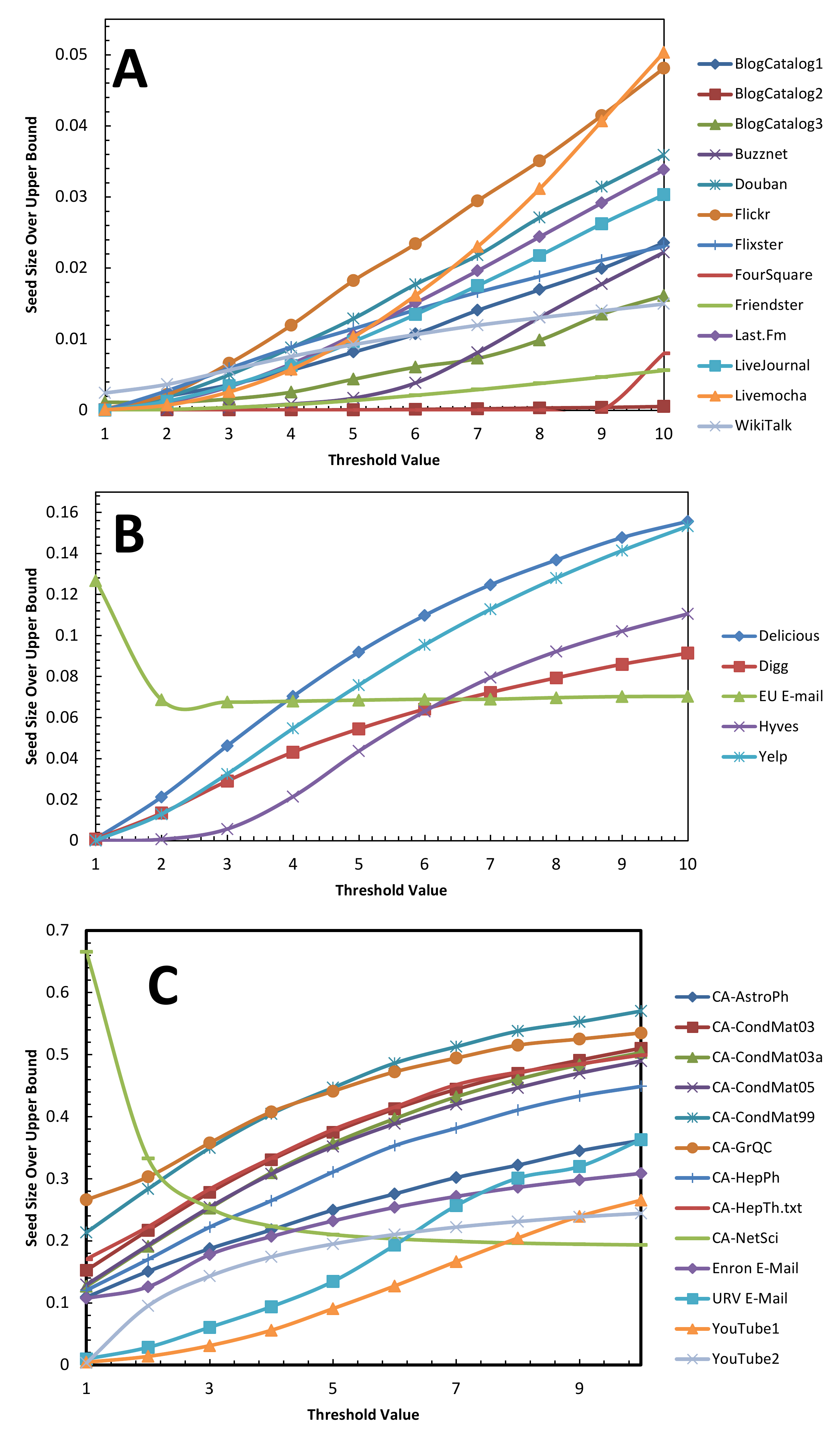}
    \end{center}
    \caption{Integer threshold values vs. the seed size divided by Reichman's upper bound~\cite{reichman12} the three categories of networks (categories A-C are depicted in panels A-C respectively).  Note that in nearly every trial, our algorithm produced an initial seed set significantly smaller than the bound - in many cases by an order of magnitude or more.}
    \label{fig4a}
\end{figure}


\end{document}